\documentclass{amsart}

\usepackage{amssymb}
\usepackage{amsmath}
\textwidth=6in
\setlength{\oddsidemargin}{.25in} 
\setlength{\evensidemargin}{.25in}
\setlength{\footskip}{.5in}

\usepackage{graphicx}
\usepackage{epstopdf}
\DeclareGraphicsRule{.tif}{png}{.png}{`convert #1 `basename #1 .tif`.png}
\usepackage{graphics}

\usepackage{rotating}% provides a rotate environment

\usepackage{amsmath, amsthm, amssymb, amsbsy,amsfonts,amscd}
\usepackage[mathscr]{eucal}
\usepackage{epsfig}
\usepackage{young}
\newtheorem{theorem}{Theorem}[section]

\newtheorem{proposition}[theorem]{Proposition}
\newtheorem{lemma}[theorem]{Lemma}

\newtheorem{remark}[theorem]{Remark}

\def\proof{\par{\it Proof}. \ignorespaces}
\def\endproof{{\ \vbox{\hrule\hbox{%
     \vrule height1.3ex\hskip0.8ex\vrule}\hrule }}\par}

\theoremstyle{definition}

\theoremstyle{remark}

\numberwithin{equation}{section}

\let\trueint=\int
\let\truesum=\sum
\def\int{\mathop{\textstyle\trueint}\limits}
\def\sum{\mathop{\textstyle\truesum}\limits}

\let\<=\langle
\let\>=\rangle

\begin{document}
 
\title[Confluent hydrodynamic type systems]
{Confluence of hypergeometric functions and integrable hydrodynamic type systems}

%    Information for first author
\author{Y. Kodama and B. Konopelchenko}
%    Address of record for the research reported here
\address{Department of Mathematics, Ohio State University,
Columbus, OH 43210, USA}
\email{kodama@math.ohio-state.edu}
\address{Dipartimento di Fisica, Universita del Salento and INFN,
sezione di Lecce, 73100 Lecce, Italy}
\email{Boris.Konopeltchenko@le.infn.it}

\keywords{}

\begin{abstract}
It is known that a large class of integrable hydrodynamic type systems can be constructed through
the Lauricella function, a generalization of the classical Gauss hypergeometric function.
In this paper, we construct novel class of 
 integrable hydrodynamic type systems which govern the dynamics of critical points of  \emph{confluent} Lauricella type functions
  defined on finite dimensional Grassmannian $\text{Gr}(2,n)$, the set of $2\times n$ matrices of rank two. Those confluent functions satisfy
certain degenerate Euler-Poisson-Darboux equations. It is also shown that in general,
hydrodynamic type system associated to the confluent Lauricella function 
is given by an integrable and \emph{non-diagonalizable} quasi-linear system of a Jordan matrix form. The cases of Grassmannian $\text{Gr}(2,5)$ for two component systems and $\text{Gr}(2,6)$ for three component systems are considered in details.

\end{abstract}

\maketitle

\setcounter{tocdepth}{1}
\tableofcontents

\section{Introduction}
Systems of quasi-linear partial differential equations of the first order, in particular, the hydrodynamic type
systems have attracted a considerable interest during last decades due to
the rich variety of mathematical structures associated with and numerous
applications in physics (see e.g. \cite{W:74, RY:80, T1:91, FFM:80, Z:80}).  Recently it was observed \cite{KKS:15} that a large class of diagonalizable hydrodynamic type systems can be
viewed also as equations governing the dynamics of critical points for
functions obeying the linear Darboux system for the function $\Psi({\bf x})$
of ${\bf x}=(x_1,\ldots,x_n)$,
\begin{equation}\label{Darboux}
\frac{\partial ^{2}\Psi }{\partial x_{i}\partial x_{k}}=A_{ik}\frac{\partial
\Psi }{\partial x_{i}}+A_{ki}\frac{\partial \Psi }{\partial x_{k}},\qquad
i\neq k,i,k=1,...,n
\end{equation}
where the functions $A_{ik}({\bf x})$ obey the nonlinear Darboux system and values
of $x_{i}$ at the critical points of $\Psi $ are Riemann invariants (see also section \ref{sec:HD}). The
simplest system \eqref{Darboux}, namely, the system of Euler-Poisson-Darboux (EPD)
equations
\begin{equation}\label{EPD}
\frac{\partial ^{2}\Psi }{\partial x_{i}\partial x_{k}}=\frac{1}{x_{i}-x_{k}}
\left(\epsilon _{k}\frac{\partial \Psi }{\partial x_{i}}-\epsilon _{i}\frac{
\partial \Psi }{\partial x_{k}}\right),\qquad i\neq k,i,k=1,...,n,
\end{equation}
with arbitrary constants $(\epsilon _{1},...,\epsilon _{n})$, provides
us the so-called $\epsilon $- systems \cite{P:03}, the
dispersionless coupled KdV equations \cite{KMM:10} etc. 
Those integrable hydrodynamic type systems are expressed by the Riemann invariant form
for  ${\bf u}=(u_1,\ldots,u_n)$,
\begin{equation}\label{diagonal}
\frac{\partial u_i}{\partial t_k}=\lambda_i^k({\bf u})\frac{\partial u_i}{\partial t_0}\qquad \text{for}\qquad k=1,2,\ldots.
\end{equation}
At $k=1$, this system gives the generalized $\epsilon$-system with
$\lambda_i^1=u_i+\sum_{j=1}^n\epsilon_ju_j$ \cite{P:03}.
Here the variables ${\bf u}(t_1,t_2,\ldots)$ are given by the critical point of a family of generalized
hypergeometric functions \cite{KKS:15} (see also section \ref{sec:HD}).
It was also shown in \cite{KKS:15} that
the EPD equations allow to build highly nontrivial solutions of the Darboux system \eqref{Darboux}
associated, for instance, with the multi-phase Whitham equations for KdV
and NLS equations \cite{DN:89, FFM:80}. This fact demonstrates an importance of the EPD
equations \eqref{EPD} for the approach proposed in \cite{KKS:15}.
In different contexts, the hypergeometric functions and their generalizations also appear in 
the study of integrable hydrodynamic type systems (see e.g. \cite{P:04, OS:10}).

The system \eqref{EPD} plays a fundamental role also in apparently completely
different subject. It is the central system of equations in the theory of
multi-dimensional \ hypergeometric functions created by Appell ($n=2$) \cite{A:882}
and Lauricella \cite{L:893} (arbitrary $n$, pages 140-143). Theory of such functions
has been then generalized \cite{G:86,GZ:86, GRS:88, GGR:92} and various associated problems have
been studied (see e.g.\cite{DM:86, Loo:07, St:07, AK:11, KT:06}). \ The demonstration that
finite-dimensional Grassmannians are highly appropriate for the study
of these generalized hypergeometric functions was one of the important
observations within this development \cite{G:86,GZ:86, GRS:88, GGR:92, AK:11}. In particular, it was
shown \cite{GRS:88, KT:06} that within this setting the classical confluence process
for the Gauss hypergeometric  function as shown in the scheme,
\[
\begin{array}{cc} \text{Gauss}\end{array}\quad\longrightarrow
\quad\begin{array}{cc}\text{Kummer}\end{array}
\quad\begin{array}{lll}\nearrow \quad\begin{array}{cc}\text{Hermite}\end{array}\\ {}\\ \searrow\quad\begin{array}{cc}\text{   Bessel}\end{array}\end{array}
\quad\begin{array}{llll}\searrow\\{}\\\nearrow\end{array}
\quad\begin{array}{cc}\text{Airy}\end{array}
\]
can be straightforwardly extend to multi-dimensional hypergeometric functions (see section \ref{sec:Lauricella-confluence}).

General multi-dimensional hypergeometric functions are particular solutions
of the EPD system \eqref{EPD}. The corresponding integrable hydrodynamic type
systems have been constructed in the paper \cite{KKS:15} (see, in particular, the equation (58) for which the characteristic speeds are $\lambda _{i}=\frac{1}{%
\epsilon _{i}}\frac{\partial F_{D}}{\partial u_{i}}$ with the Lauricella
hypergeometric function $F_{D}(u_{1},...,u_{n})$). \ So the natural question
arises how the confluence process as shown above affects the
hydrodynamic type systems associated with the \ corresponding confluent
Lauricella type functions.

This problem is addressed in the present paper. First, we discuss the
confluence process for Lauricella functions on the Grassmannians
$\text{Gr}(2,n+3)$  following \cite{GRS:88, KT:06}. At the top cell these Lauricella 
functions are solutions of the EPD system \eqref{EPD} with arbitrary $\epsilon_{1},...,\epsilon _{n}$ and Lauricella function has singular points $\left\{ 0,1,\frac{1}{x_{1}},\frac{1}{x_{2}},..,%
\frac{1}{x_{n}},\infty \right\} $. Confluence means that one or several
regular singular points $\frac{1}{x_{i}}\rightarrow \infty $ in a way that,
for instance, $x_{i}=\delta x_{i}^{\ast },\epsilon _{i}=\frac{1}{\delta }%
\epsilon _{i}^{\ast }$ with $\delta \rightarrow 0$ so that the product $%
x_{i}\epsilon _{i}$ remains finite. In the most degenerate case when all points $\frac{1}{x_{i}}
\rightarrow\infty $ the corresponding system of linear PDEs is given in \cite{GRS:88},
\begin{equation}\label{C-EPD}
\frac{\partial \Phi }{\partial x_{l}^{\ast }}=\frac{\partial ^{2}\Phi }{%
\partial x_{i}^{\ast }\partial x_{j}^{\ast }},\qquad\text{for all}\quad i+j=l
\end{equation}%
which is the degenerated GKZ system \cite{G:86,GZ:86, GRS:88, GGR:92}. In the intermediate cases
the basic system of linear PDEs is the mixture of degenerated EPD
equations (specific Darboux equations \eqref{Darboux}) and equations \eqref{C-EPD}. It is
important that during the confluence process the number of independent
variables $x_{i}^{\ast }$ remains the same.

It was shown that the dynamics of the critical points $\mathbf{u}%
=(u_{1},...,u_{n})$ of the families of Lauricella type functions
are governed by integrable hierarchies of hydrodynamic type systems
(see e.g. \cite{KKS:15}, and section \ref{sec:HD}). In this paper, 
 we extend these results to  confluent Lauricella type functions. In
particular, the deformations of the critical points of functions obeying
the system \eqref{C-EPD} are described by the non-diagonalizable systems (having
strongly non-strict hyperbolicity),
\begin{equation}\label{N-diagonal}
\frac{\partial \mathbf{u}}{\partial t_{k}}={\sf A}_k({\bf u})\frac{\partial \mathbf{u}}{
\partial t_{0}}\qquad\text{for}\qquad k=1,2,\ldots,
\end{equation}%
where $\mathbf{u}=(u_{1},...,u_{n})$,  and in particular, at $k=1$, we have an $n\times n$ Jordan block form,
\begin{equation}\label{A1}
{\sf A}_1({\bf u})=\left( 
\begin{array}{ccccc}
u_{1} & 1 & 0 & \ldots& 0 \\ 
0 & u_{1} & 1 & \ldots & 0 \\ 
\vdots & \ddots & \ddots & \ddots &\vdots \\ 
0 & \ldots & 0 & u_{1} & 1 \\ 
0 & \ldots & 0 & 0 & u_{1}%
\end{array}%
\right) .
\end{equation}
Change of variables $x_i$ in \eqref{EPD} together with an appropriate limit $\delta\to 0$ transforms the $n$-dimensional EPD system  into the equations \eqref{C-EPD} and the generalized $\epsilon$-systems \eqref{diagonal} into the system \eqref{N-diagonal}.

 Functions ${\bf u}(t)$ in \eqref{N-diagonal} are not Riemann invariants
and the system \eqref{C-EPD} are not of the Darboux form \eqref{Darboux}.  So, the system \eqref{C-EPD} is
the evidence that the critical points approach considered in \cite{KKS:15} is
applicable to a wider, than the Darboux system \eqref{Darboux}, class of systems of
linear PDEs. The system \eqref{N-diagonal} is not tractable via Tsarev's generalized
hodograph equations. Nevertheless, it is integrable in the sense that it has infinite set of
commuting flows, and the corresponding hodograph equations are of the matrix
type discussed in \cite{K:89} (see also section \ref{sec:CHD}).

Within different approaches the integrable nondiagonalizable systems of hydrodynamic type have been studied earlier in the papers \cite{F:93,F:94,M:95,MF:96}. However, in contrast to the systems \eqref{N-diagonal} with \eqref{A1} whose eigenvalues are \emph{degenerate} and 
\emph{nondiagonalizable} coefficient matrix, all the systems considered there are \emph{strictly hyperbolic} one having real and distinct eigenvalues.

The paper is organized as follows. 
In section \ref{sec:HD}, Lauricella functions and the corresponding hydrodynamic type systems are reviewed.  In section \ref{sec:Lauricella-confluence}, we present the generalized hypergeometric functions on Grassmannian $\text{Gr}(2,n+3)$ and their confluence based on the papers \cite{G:86, GZ:86, GGR:92, KT:06}.  Each confluence can be parametrized by a partition of the number $n+3=i_1+\cdots+i_m$,
denoted by $n+3=(i_1,\ldots,i_m)$.
In section \ref{sec:HD-Gr25}, we give an explicit method to construct the hydrodynamic type systems with two components $(n=2)$ corresponding to the generalized Lauricella functions defined on $\text{Gr}(2,5)$.  The hydrodynamic system associated with the confluence given by the partition $5=(5)$, the most degenerate case,
is shown to be a strongly non-strict hyperbolic system (see \eqref{N-diagonal}).
In section \ref{sec:CHD-Gr26}, we construct the hydrodynamic type systems for $\text{Gr}(2,6)$.
In this case, there are three types of hydrodynamic type systems with three components.
The coefficient matrix of the system is either diagonal,  a $3\times 3$ Jordan block or a mixed one having a $2\times 2$ Jordan block.  In particular, we discuss the details of the mixed type.
In section \ref{sec:CHD}, we discuss the most degenerate confluent case, i.e. $n+3=(n+3)$,
and construct the corresponding hydrodynamic type systems given by \eqref{N-diagonal}.  We also provide the hodograph type solutions which are given in matrix form.

In this paper, we present hydrodynamic type systems having one Jordan block in the coefficient matrix.
This class of non-diagonalizable hydrodynamic type systems has not been discussed in the context of
integrable systems. In a future study, we will extend the systems to have arbitrary number of Jordan blocks,
and discuss the systems from the confluent process of the generalized hypergeometric type functions on Grassmannian $\text{Gr}(k,n)$.

%%%%%%%%%%%%%%%%%%%%%%%%%%%%%%%%%

\section{Lauricella functions and hydrodynamic type systems}\label{sec:HD}
We start with the following differential, called the Lauricella differential \cite{Loo:07}, 
\begin{equation}\label{Lauricella}
\eta({\bf x};z)\, dz=\prod_{j=1}^n(1-x_jz)^{-\epsilon_j}\, dz,
\end{equation}
where ${\bf x}=(x_1,\ldots,x_n)$ with $x_i\in \mathbb{C}$ and each $\epsilon_j\in \mathbb{C}$
is a fixed constant.
%Define $F^{\alpha}(u_1,\ldots,u_n)$ as
%\[
%F^{\alpha}(u_1,\ldots,u_n)=\int_{\Delta_\alpha}\eta(z;u_1,\ldots,u_n)\,dz
%\]
%where $\Delta_\alpha$ is a path connecting $u_\alpha$ and $u_{\alpha-1}$.

An important property of the Lauricella differential is that  $\eta({\bf x};z)$ satisfies the EPD system
\eqref{EPD},
\[
\frac{\partial^2\eta}{\partial x_i\partial x_j}=\frac{1}{x_i-x_j}\left(\epsilon_j\frac{\partial \eta}{\partial x_i}-\epsilon_i\frac{\partial\eta}{\partial x_j}\right).
\]
Such solution can be found easily by assuming the solution to be in the separation of variables.
Also note that $\tilde\eta:=f(z)\eta$ for any function $f(z)$ satisfies the EPD system,
and we call such $\tilde{\eta}$ \emph{Lauricella-type} function.

%%%%%%%%%%%%%%%%%%%%%

\subsection{Hydrodynamic type systems associated to the Lauricella-type functions}
Here we explain how we construct hydrodynamic systems from the Lauricella function \eqref{Lauricella} based on the method proposed in \cite{KKS:15}.

 First we expand the Lauricella function $\eta({\bf x};z)$ in terms of $z$,
\[
\eta({\bf x}; z)=\sum_{k=0}^{\infty}F^k({\bf x})z^k\qquad\text{with}\qquad F^0({\bf x})=1.
\]
Here the first few $F^k({\bf x})$ are given by
\begin{align*}
F^1&=\sum_{j=1}^n\epsilon_jx_j,\qquad F^2=\sum_{j=1}^n\frac{\epsilon_j(\epsilon_j+1)}{2}x_j^2+\sum_{i<j}\epsilon_i\epsilon_jx_ix_j,\\
F^3&=\sum_{j=1}^n\frac{\epsilon_j(\epsilon_j+1)(\epsilon_j+2)}{3!}x_j^3+\sum_{i<j}\frac{\epsilon_i\epsilon_j}{2}\{(\epsilon_i+1)x_i+(\epsilon_j+1)x_j\}x_ix_j+\sum_{i<j<l}\epsilon_i\epsilon_j\epsilon_lx_ix_jx_l.
\end{align*}
Notice that each $F^k({\bf x})$ is a solution of the EPD system \eqref{EPD}.

Then we define a function, called the generating function of hydrodynamic type system, 
\begin{equation}\label{generator}
\Phi({\bf t};{\bf x}):=\sum_{k=0}^m t_kF^{k+1}({\bf x}),
\end{equation}
for arbitrary $m>n$.
Since each $F^k({\bf x})$ satisfies the EPD system, the function $\Phi({\bf t};{\bf x})$
as a function of ${\bf x}$ with the parameters ${\bf t}$ also satisfies the EPD system.

The critical point ${\bf u}=(u_1,\ldots,u_n)$ of $\Phi({\bf t};{\bf x})$ with respect to ${\bf x}$ is given by
\begin{equation}\label{critical}
\frac{\partial \Phi}{\partial x_j}\Big|_{{\bf x}={\bf u}}=\sum_{k=0}^mt_k\frac{\partial F^{k+1}}{\partial x_j}\Big|_{{\bf x}={\bf u}}=0\qquad\text{for}\qquad j=1,\ldots, n.
\end{equation}
which define ${\bf u}$ as the functions of ${\bf t}$, i.e. ${\bf u}={\bf u}({\bf t})$.
Let us simply write $\frac{\partial \Phi}{\partial x_j}|_{{\bf x}={\bf u}}=\frac{\partial \Phi}{\partial u_j}$
and $\frac{\partial F^k}{\partial x_j}|_{{\bf x}={\bf u}}=\frac{\partial F^k}{\partial u_j}$.
Then taking the derivative of the critical equations with respect to $t_i$, we have
\[
\frac{\partial F^{i+1}}{\partial u_j}+\sum_{k=1}^m\sum_{l=1}^n t_k\frac{\partial^2F^{k+1}}{\partial u_j\partial u_l}\frac{\partial u_l}{\partial t_i}=\frac{\partial F^{i+1}}{\partial u_j}+\sum_{l=1}^n \frac{\partial^2\Phi}{\partial x_j\partial x_l}\Big|_{{\bf x}={\bf u}}\frac{\partial u_l}{\partial t_i}=0.
\]
The EPD system \eqref{EPD} implies
\[
\frac{\partial^2\Phi}{\partial x_j\partial x_l}\Big|_{{\bf x}={\bf u}}=\frac{1}{u_j-u_l}\left(\epsilon_l\frac{\partial \Phi}{\partial u_j}-\epsilon_j\frac{\partial \Phi}{\partial u_l}\right)=0\qquad \text{when}\qquad j\ne l.
\]
Then we have
\[
\frac{\partial F^{k+1}}{\partial u_j}+ \frac{\partial^2\Phi}{\partial x_j^2}\Big|_{{\bf x}={\bf u}}\frac{\partial u_j}{\partial t_k}=0
\]
which then gives a hierarchy of hydrodynamic type systems in the Riemann invariant form for ${\bf u}={\bf u}({\bf t})$,
\begin{equation}\label{RIsystem}
\frac{\partial u_j}{\partial t_k}=\lambda_j^k({\bf u})\frac{\partial u_j}{\partial t_0},
\qquad\left\{\begin{array}{lll}
j=1,\ldots,n\\[1.0ex]
k=1,\ldots,m.
\end{array}\right.
\end{equation}
Here $\lambda_j^k({\bf u})$ is the characteristic speed defined by
\begin{equation}\label{characteristics}
\lambda_j^k=\frac{\partial F^{k+1}}{\partial u_j}\Big/\frac{\partial F^1}{\partial u_j}=\frac{1}{\epsilon_j}\,\frac{\partial F^{k+1}}{\partial u_j},
\end{equation}
(note $\lambda^0_j=1$).  The system \eqref{RIsystem} with $k=1$ is nothing but the generalized
$\epsilon$-system \cite{P:03}, that is, we have
\[
\lambda_j^1({\bf u})=u_j+F^1({\bf u})=u_j+\sum_{i=1}^n\epsilon_iu_i.
\]
For example, we have the following hydrodynamic type system for $n=2$,
\begin{align*}
\frac{\partial u_1}{\partial t_1}&=\left((1+\epsilon_1)u_1+\epsilon_2u_2\right)\frac{\partial u_1}{\partial t_0}\\
\frac{\partial u_2}{\partial t_1}&=\left(\epsilon_1u_1+(1+\epsilon_2)u_2\right)\frac{\partial u_2}{\partial t_0}
\end{align*}
In general, the characteristic speed for the $t_k$-th flow  is given by
\[
\lambda_j^k({\bf u})=\sum_{l+m=k}u_j^lF^m({\bf u}).
\]

One has an infinite family of hydrodynamic type systems with all possible
values of $(\epsilon _1,\ldots,\epsilon_n)$.  In particular, for $n=2$,
the following cases are well known: (1)  for $\epsilon _{1}=\epsilon _{2}=\frac{1}{2}$
one has the dispersionless NLS system and its higher counterparts, (2) for  $\epsilon _{1}=\epsilon _{2}=-\frac{1}{2}$ it is the dispersionless Toda lattice, (3) for $\epsilon _{1}=-\epsilon _{2}=-\frac{1}{2}$ it is a mixed dispersions NLS-Toda equation ((49) in \cite{KKS:15}), (4) for $\epsilon _{1}=\epsilon_{2}=\frac{1}{6}$ one has the dispersionless Boussinesq equation.

\begin{remark}
The compatibility of the system \eqref{RIsystem}, $\frac{\partial^2u_j}{\partial t_k\partial t_l}=\frac{\partial^2 u_j}{\partial t_l\partial t_k}$, is given by
\begin{equation}\label{comp}
\frac{\frac{\partial\lambda_j^{k}}{\partial u_i}}{\lambda_i^{k}-\lambda_j^{k}}=
\frac{\frac{\partial\lambda_j^{l}}{\partial u_i}}{\lambda_i^{l}-\lambda_j^{l}}\qquad\text{for}\quad i\ne j.
\end{equation}
Using  the formula of $\lambda_j^k$ in \eqref{characteristics} and the EPD equation for
$F^k$,   the left hand side of the compatibility equation becomes
\[
\frac{\frac{\partial\lambda_j^{k}}{\partial u_i}}{\lambda_i^{k}-\lambda_j^{k}}=
\frac{\epsilon_i}{u_i-u_j}.
\]
Since the right hand side does not depend on the index $k$ for $t_k$, we have the compatibility
among the different equations in the hierarchy \eqref{RIsystem}.
\end{remark}

\begin{remark}
One should note that the critical equations \eqref{critical} provide the hodograph solutions
of the system \eqref{RIsystem}, i.e.
\[
t_0+\sum_{k=1}^m t_k\lambda_j^k({\bf u})=0\qquad\text{for}\qquad j=1,\ldots,n.
\]
As we will show in the later sections, the hodograph form will be extended to have a matrix form
\cite{K:89} 
which is directly obtained from the critical equations of the corresponding function $\Phi({\bf t};{\bf x})$
for our new class of hydrodynamic type systems.
\end{remark}

%%%%%%%%%%%%%%%%%%  Lauricello  %%%%%%%%%%%%%%%%%

\section{Lauricella type functions and their confluences}\label{sec:Lauricella-confluence}

The Lauricella function defined by
\[
F({\bf x})=\int_{\Delta}\eta({\bf x}; z)\,dz=\int_{\Delta}\prod_{j=1}^n(1-x_jz)^{-\epsilon_j}\, dz,
\]
was introduced as a multivariable extension of the Gauss hypergeometric function,
\[
F(\alpha,\beta,\gamma;x)=\int_0^1z^{\alpha-1}(1-z)^{-\alpha+\gamma-1}(1-xz)^{-\beta}\,dz,
\]
which is the case $n=3$ with $x_1=0, x_2=1$ and $x_3=x$.  The Gauss differential $\eta(0,1,x;z)dz$ has the regular singular points $\{0,1,\frac{1}{x},\infty\}$. It is then well-known that the Gauss hypergeometric function is reduced to the Kummer confluent hypergeometric function after the confluence of the singular points $\frac{1}{x}$ and $\infty$.
Furthermore, the Kummer function can be reduced to either the Hermite-Weber function or Bessel function by taking further confluences.  The confluences are parametrized by the partitions of the number $n+3=4$:

 Let $(i_1,\ldots,i_m)$ denote the partition $n+3=i_1+\cdots+i_m$ with
$i_1\ge i_2\ge\ldots \ge i_m$.  Each number $i_j$ represents the number of confluences at the
corresponding singular point for the Gauss (Lauricella) differential.
For  $n=3$, i.e. the case of Gauss hypergeometric function, we have $4=(1,1,1,1)$.
Then the partition $4=(2,1,1)$ corresponds to the the Kummer function, $4=(2,2)$ to
the Bessel function, $4=(3,1)$ to the Hermite-Weber function, and $4=(4)$ to the Airy function as illustrated in the diagram below.
\[
\begin{array}{cc} \text{Gauss}\\(1,1,1,1)\end{array}\quad\longrightarrow
\quad\begin{array}{cc}\text{Kummer}\\(2,1,1)\end{array}
\quad\begin{array}{lll}\nearrow \quad\begin{array}{cc}\text{Hermite}\\(3,1)\end{array}\\ {}\\ \searrow\quad\begin{array}{cc}\text{   Bessel}\\(2,2)\end{array}\end{array}
\quad\begin{array}{llll}\searrow\\{}\\{}\\\nearrow\end{array}
\quad\begin{array}{cc}\text{Airy}\\(4)\end{array}
\]
In this section, we extend those confluence processes for the Lauricella function.
For the general references, we recommend the following papers, which are most relevant to our study,
\cite{G:86, GZ:86, GRS:88, GGR:92, AK:11, KT:06}.

%%%%%%%%%%%%%%%%%%%%%%%%%%%%%%%

\subsection{The generalized hypergeometric functions}
To explain the confluence for the Lauricella function, we first introduce the generalized (Aomoto-Gel'fand) hypergeometric function on the Grassmannian $\text{Gr}(2,n+3)$, which generalizes the Lauricella function \cite{G:86, GGR:92, AK:11}. Let $\zeta$ be a point of Grassmannian $\text{Gr}(2,n+3)$. That is, $\zeta$ can be expressed by a $2\times (n+3)$ matrix of rank 2, denoted by $\zeta\in M_{k\times (n+3)}(\mathbb{C})$.  Recall that $\text{Gr}(2,n+3)$ 
is given by $\text{Gr}(2,n+3)\cong \text{GL}_2(\mathbb{C})\backslash M_{2\times (n+3)}(\mathbb{C})$.
Then the generalized hypergeometric function is defined by 
\begin{equation}\label{GHG}
F(\zeta;\mu)=\int_{\Delta}\chi(\tau\zeta;\mu)\,\omega_{\tau}\qquad\text{with}\qquad
\omega_{\tau}:=\tau_0d\tau_1-\tau_1 d\tau_0
\end{equation}
where  $\tau=(\tau_0:\tau_1)\in\mathbb{CP}^1$,
and $\Delta$ is a path on $\mathbb{CP}^1$.  The function 
$\chi(\tau\zeta;\mu)$ is the character of the centralizer of a regular element of $\mathfrak{gl}_{n+3}(\mathbb{C})$ with the weight $\mu=(\mu_0,\ldots,\mu_{n})$.  Each regular element $A\in \mathfrak{gl}_{n+3}$ can be expressed as a Jordan matrix associated to the partition $n+3=(i_1,\ldots,i_m)$,
\[
A_{(i_1,\ldots,i_m)}:=A_{i_1}\oplus A_{i_2}\oplus \cdots\oplus A_{i_m},
\]
where $A_{i_k}$ is the $i_k\times i_k$ Jordan block with an eigenvalue $a_{i_k}$, i.e.
\[
A_{i_k}:=\begin{pmatrix}
a_{i_k} & 1 & 0 & \cdots & 0 \\
0      &  a_{i_k}  &1 & \cdots & 0\\
\vdots & \vdots &  \ddots & \ddots&\vdots\\
0 & 0 & \cdots &\cdots & 1\\
0 & 0 & \cdots & \cdots & a_{i_k} 
\end{pmatrix} = a_{i_k}I_{i_k}+\Lambda_{i_k},
\]
where $I_{i_k}$ is the $i_k\times i_k$ identity matrix, and $\Lambda_{i_k}$ is the nilpotent matrix
having 1's in the upper super-diagonal.
Here one should have $a_{i_j}\ne a_{i_k}$ if $i_j\ne i_k$. The partition $(i_1,\ldots,i_m)$ gives  a parametrization of the corresponding centralizer which we denote
\[
H_{(i_1,\ldots,i_m)}:=\left\{h\in\text{GL}_{n+3}(\mathbb{C}):hA_{(i_1,\ldots,i_m)}=A_{(i_1,\ldots,i_m)}h\right\}.
\]
The $H_{(i_1,\ldots,i_m)}$ can be expressed by
\[
H_{(i_1,\ldots,i_m)}=H_{i_1}\oplus H_{i_2}\oplus \cdots \oplus H_{i_m},
\]
where $H_{k}$ is given by the set of matrices with the form,
\[
 \sum_{j=0}^{k-1}h_j\Lambda^j_k=\begin{pmatrix} h_0 & h_1 &\cdots & h_{k-1}\\
0 & h_0 & \cdots & h_{k-2}\\
\vdots & \vdots &\ddots &\vdots \\
0 & 0 & \cdots & h_0
\end{pmatrix},
\]
where $h_0\in \mathbb{C}^{\times}$ and $h_j\in \mathbb{C}$ for $j=1,\ldots,k-1$.
In particular, $H_{(1,\ldots,1)}$ consists of the diagonal matrices with distinct diagonal elements.
One should note that the dimension of the group $H_{(i_1,\ldots,i_m)}$ is $n+3$ which is
just the number of parameters in the group. 

Following the paper \cite{KT:06}, we construct a group character $\chi$ for each $H_k$ (more precisely, its universal cover $\tilde{H}_k$, and $\chi:\tilde{H}_k\to \mathbb{C}^{\times}$) as follows.  
 First introduce a function $\theta_j(h)$ on $H_k$ for $j=0,1,\ldots,k-1$,
 \[
 \log \left(h_0+\sum_{j=1}^{k-1}h_j\Lambda_k^j\right)=\log h_0+\sum_{j=1}^{k-1}\theta_j(h)\Lambda_k^j.
 \]
  First four terms of $\theta_j(h)$ are given by
 \begin{align*}
& \theta_1(h)=\frac{h_1}{h_0},\qquad\theta_2(h)=\frac{h_2}{h_0}-\frac{1}{2}\left(\frac{h_1}{h_0}\right)^2,\qquad \theta_3(h)=\frac{h_3}{h_0}-\frac{h_1}{h_0}\frac{h_2}{h_0}+\frac{1}{3}\left(\frac{h_1}{h_0}\right)^3,\\[0.5ex]
&\theta_4(h)=\frac{h_4}{h_0}-\frac{h_1h_3}{h_0^2}-\frac{1}{2}\left(\frac{h_2}{h_0}\right)^2+\frac{h_1^2h_2}{h_0^3}-\frac{1}{4}\left(\frac{h_1}{h_0}\right)^4.
 \end{align*}
 Then the character $\chi(h;\mu)$ with $\mu=(\mu_0,\mu_1,\ldots,\mu_{k-1})\in\mathbb{C}^k$
 is defined by
 \[
 \chi(h_0,h_1,\ldots,h_{k-1};\mu)=h_0^{\mu_0}\,\exp\left(\sum_{j=1}^{k-1}\mu_j\theta_j(h)\right).
 \]
 For the centralizer $H_{(i_1,\ldots,i_m)}$, we have the group character defined by
 \begin{equation}\label{character}
 \chi(h;\mu)=\prod_{k=1}^m\chi_{i_k}(h^{(k)};\mu^{(k)})=\prod_{k=1}^m\left(h_0^{(k)}\right)^{\mu_0^{(k)}}\exp\left(\sum_{j=1}^{i_k-1}\mu_j^{(k)}\theta_j(h^{(k)})\right).
 \end{equation}
 where $h$ is assigned as $h=(h_0^{(1)},\ldots,h_{i_1-1}^{(1)},h_0^{(2)},\ldots,h_{i_2-1}^{(2)},\ldots,h_0^{(m)},\ldots,h_{i_m-1}^{(m)})$.
In the case of $H_{(1,\ldots,1)}$, we have
\[
\chi(h;\mu)=\prod_{k=1}^n\left(h_0^{(k)}\right)^{\mu_0^{(k)}}.
\]
Note that the parameters $h_j^{(k)}$ for $k=1,\ldots,m$ are given by
\[
h_0^{(k)}\in\mathbb{C}^{\times}\qquad\text{and}\quad \qquad h_j^{(k)}\in\mathbb{C}
\quad\text{for}\quad j=1,\ldots,i_k-1.
\]

We now consider an action on the subset $Z\subset M_{2\times(n+3)}(\mathbb{C})$  whose $2\times 2$ minors are all nonzero (i.e. $Z$ can be identified with the top cell of the Grassmannian $\text{Gr}(2,n+3)$), 
\[
\begin{array}{ccccc}
\text{GL}_2(\mathbb{C})\times H_{(i_1,\ldots,i_m)}&:&Z&\longrightarrow &M_{2\times(n+3)}(\mathbb{C})\\[1.0ex]
(g,h)&: & \zeta &\longmapsto & g\,\zeta\,h
\end{array}
\]
Note here that the action $\text{GL}_2(\mathbb{C})$ from the left gives a canonical form
of the point in $\text{Gr}(2,n+3)$. 

Then the space of matrices obtained by the image of this action may be expressed in the form,
\[
Z_{i_1,\ldots,i_m}=\left\{\zeta=(\zeta^{(1)},\ldots,\zeta^{(m)})\in M_{2\times (n+3)}: 
\begin{array}{lll}
&\zeta^{(j)}=(\zeta_0^{(j)},\ldots,\zeta_{i_j-1}^{(j)})\in M_{2\times i_j}, j=1,\ldots,m\\
&\text{with}~
\text{det}(\zeta_0^{(j)},\zeta_1^{(j)})\ne 0, ~\text{det}(\zeta_0^{(i)},\zeta_0^{(j)})\ne 0
\end{array}\right\}
\]
One should note that the dimension of $Z_{(i_1,\ldots,i_m)}$ is $n$, i.e. 
\[
\text{dim}\text{Gr}(2,n+3)-(\text{dim}\,H_{(i_1,\ldots,i_m)}-1)=2(n+1)-(n+2)=n.
\]

Then we define the generalized Lauricella function $\eta({\bf x};z)$ by
\begin{equation}\label{gLauricella}
\eta({\bf x};z)\,dz=\chi(\tau\zeta;\mu)\,\omega_\tau
\end{equation}
where ${\bf x}={\bf x}(\zeta)$, $z=\frac{\tau_1}{\tau_0},
\tau=(\tau_0:\tau_1)\in\mathbb{CP}^1$ and $  \zeta\in Z_{i_1,\ldots,i_m}$.

%%%%%%%%%%%%%%%%%%%%%
\subsection{Examples for $\text{Gr}(2,5)$}\label{ex:Gr25}
We here compute all the generalized Lauricella differentials $\chi(\tau\zeta;\mu)\omega_{\tau}=\eta({\bf x};z)dz$ for  
$\text{Gr}(2,5)$ i.e. $n=2$.  We have 7 cases:
\begin{itemize}
\item[(1)] With the partition $5=(1,1,1,1,1)$, we have
\begin{align*}
&\zeta=\begin{pmatrix}
1 & 0 & 1 & 1 & 1\\
0& 1 & -1 & -x_1 & -x_2
\end{pmatrix}\qquad\text{with}\qquad x_1 x_2 (x_1-1)(x_2-1)(x_1-x_2)\ne 0 \\[0.8ex]
&\begin{aligned}\chi(\tau\zeta;\mu)\,\omega_{\tau}&=\tau_0^{\mu_0^{(1)}}\tau_1^{\mu_0^{(2)}}(\tau_0-\tau_1)^{\mu_0^{(3)}}(\tau_0-x_1\tau_1)^{\mu_0^{(4)}}(\tau_0-x_2\tau_1)^{\mu_0^{(5)}}\omega_{\tau}\\[0.8ex]
&=z^{\mu_0^{(2)}}(1-z)^{\mu_0^{(3)}}(1-x_1z)^{\mu_0^{(4)}}(1-x_2z)^{\mu_0^{(5)}}\,dz=\eta({\bf x};z)\,dz.
\end{aligned}
\end{align*}
where $z=\tau_1/\tau_0$, and we have used $\sum_{j=0}^4\mu_0^{(j)}=-2$.
Notice that all the minors of this matrix is nonzero, i.e. $\zeta$ gives a point on the top cell of
$\text{Gr}(2,5)$.  Setting $\mu_0^{(4)}=-\epsilon_1, \mu_0^{(5)}=-\epsilon_2, \mu_0^{(2)}=-\epsilon_3, \mu_0^{(3)}=-\epsilon_4$ with $x_3=0$ and $x_4=1$, we have the Lauricella differential for $n=4$ with regular singular points $\{0,1,\frac{1}{x_1},\frac{1}{x_2},\infty\}$, i.e.
\[
\eta({\bf x};z)=z^{-\epsilon_3}(1-z)^{-\epsilon_4}(1-x_1z)^{-\epsilon_1}(1-x_2z)^{-\epsilon_2}.
\]
\item[(2)]  With the partition $5=(2,1,1,1)$, we have
\begin{align*}
&\zeta=\begin{pmatrix}
1 & 0 & 0 & 1 & 1\\
0& x_1 & 1 & -1 & -x_2
\end{pmatrix}\qquad\text{with}\qquad x_1 x_2 (x_2-1)\ne 0\\[0.8ex]
&\begin{aligned}\chi(\tau\zeta;\mu)\,\omega_{\tau}&=\tau_0^{\mu_0^{(1)}}\tau_1^{\mu_0^{(2)}}e^{\mu_1^{(1)}x_1\frac{\tau_1}{\tau_0}}(\tau_0-\tau_1)^{\mu_0^{(3)}}(\tau_0-x_2\tau_1)^{\mu_0^{(4)}}\omega_{\tau}\\[0.8ex]
&=e^{\mu_1^{(1)}x_1z}z^{\mu_0^{(2)}}(1-z)^{\mu_0^{(3)}}(1-x_2z)^{\mu_0^{(4)}}\,dz=\eta({\bf x};z)\,dz.
\end{aligned}
\end{align*}
Note that the minors are nonzero except one with the indices $(2,3)$, i.e. this is a point
of a cell with co-dimension one. Now the singular points are $\{0,1,\frac{1}{x_2}, \infty\}$, and
in particular, the point $z=\infty$ is an irregular singular point due to the confluence of 
the regular singular points $z=\frac{1}{x_1}$ and $z=\infty$ of the Lauricella differntial in (1). The generalized Lauricella function for this case is given by
\[
\eta({\bf x};z)=z^{-\epsilon_3}(1-z)^{-\epsilon_4}(1-x_2z)^{-\epsilon_2}e^{\epsilon_1x_1}.
\]
\item[(3)]  With $5=(2,2,1)$, we have
\begin{align*}
&\zeta=\begin{pmatrix}
1 & 0 & 0 & 1 & 1\\
0& x_1 & 1 & 0 & -x_2
\end{pmatrix}\qquad\text{with}\qquad x_1 x_2 \ne 0\\[0.8ex]
&\begin{aligned}\chi(\tau\zeta;\mu)\,\omega_{\tau}&=\tau_0^{\mu_0^{(1)}}\tau_1^{\mu_0^{(2)}}e^{\mu_1^{(1)}x_1\frac{\tau_1}{\tau_0}+\mu_1^{(2)}\frac{\tau_0}{\tau_1}}(\tau_0-x_2\tau_1)^{\mu_0^{(3)}}\omega_{\tau}\\[0.8ex]
&=e^{\mu_1^{(1)}x_1z+\mu_1^{(2)}\frac{1}{z}}z^{\mu_0^{(2)}}(1-x_2z)^{\mu_0^{(3)}}\,dz=\eta({\bf x};z)\,dz.
\end{aligned}
\end{align*}
Note that two minors with the index sets $(2,3)$ and$(1,4)$ are zero, i.e. this is a point
of a cell with co-dimension two. The generalized Lauricella function is then given by
\[
\eta({\bf x};z)=z^{-\epsilon_3}(1-x_2z)^{-\epsilon_2}e^{\epsilon_1x_1z+\epsilon_4\frac{1}{z}}.
\]

\item[(4)] With $5=(3,1,1)$, we have
\begin{align*}
&\zeta=\begin{pmatrix}
1 & 0 & 0 &0 & 1\\
0& 1 & x_1 & 1 &- x_2
\end{pmatrix}\qquad\text{with}\qquad x_1 x_2 \ne 0\\[0.8ex]
&\begin{aligned}\chi(\tau\zeta;\mu)\,\omega_{\tau}&=\tau_0^{\mu_0^{(1)}}\tau_1^{\mu_0^{(2)}}e^{\mu_1^{(1)}\frac{\tau_1}{\tau_0}+\mu_2^{(1)}\left(x_1\frac{\tau_1}{\tau_0}-\frac{1}{2}(\frac{\tau_1}{\tau_0})^2\right)}(\tau_0-x_2\tau_1)^{\mu_0^{(3)}}\omega_{\tau}\\[0.8ex]
&=e^{\mu_1^{(1)}z+\mu_2^{(1)}\left(x_1z-\frac{1}{2}z^2\right)}z^{\mu_0^{(2)}}(1-x_2z)^{\mu_0^{(3)}}\,dz=\eta({\bf x};z)\,dz.
\end{aligned}
\end{align*}
Note that two minors with the index sets $(2,3), (2,4)$  and $(3,4)$ are zero.
Here the vanishing minor with $(3,4)$  is a consequence of the Pl\"ucker relation, and
the $\zeta$ is a point of a cell with co-dimension two. The generalized Lauricella function is then given by
\[
\eta({\bf x};z)=z^{-\epsilon_3}(1-x_2z)^{-\epsilon_2}e^{\epsilon_1(x_1z-\frac{1}{2}z^2)+\epsilon_4z}.
\]

\item[(5)] With $5=(3,2)$, we have
\begin{align*}
&\zeta=\begin{pmatrix}
1 & 0 &0& 0  & 0\\
0 & 1 &x_1& 1 & x_2
\end{pmatrix}\qquad\text{with}\qquad x_1 x_2 \ne 0\\[0.8ex]
&\begin{aligned}\chi(\tau\zeta;\mu)\,\omega_{\tau}&=\tau_0^{\mu_0^{(1)}}\tau_1^{\mu_0^{(2)}}e^{\mu_1^{(1)}\frac{\tau_1}{\tau_0}+\mu_2^{(1)}\left(x_1\frac{\tau_1}{\tau_0}-\frac{1}{2}\left(\frac{\tau_1}{\tau_0}\right)^2\right)+\mu_1^{(2)}x_2\frac{\tau_1}{\tau_0}}\,\omega_{\tau}\\[0.8ex]
&=z^{\mu_0^{(2)}}e^{\mu_1^{(1)}z+\mu_2^{(1)}\left(x_1z-\frac{1}{2}z^2\right)+\mu_1^{(2)}x_2z}\,dz=\eta({\bf x};z)\,dz.
\end{aligned}
\end{align*}
Note that the minors with the index sets $(2,3), (2,4)$  and $(2,5)$ are zero.
Other vanishing minors are obtained by the Pl\"ucker relations, and
the $\zeta$ is a point of a cell with co-dimension three. The generalized Lauricella function is then given by
\[
\eta({\bf x};z)=z^{-\epsilon_3}e^{\epsilon_1(x_1z-\frac{1}{2}z^2)+\epsilon_2 x_2z+\epsilon_4z}.
\]

\item[(6)] With $5=(4,1)$, we have
\begin{align*}
&\zeta=\begin{pmatrix}
1 & 0 &0& 0  & 1\\
0 & 1 &0& -x_1 &- x_2
\end{pmatrix}\qquad\text{with}\qquad x_1 x_2 \ne 0\\[0.8ex]
&\begin{aligned}\chi(\tau\zeta;\mu)\,\omega_{\tau}&=\tau_0^{\mu_0^{(1)}}e^{\mu_1^{(1)}\frac{\tau_1}{\tau_0}+\mu_2^{(1)}\left(-\frac{1}{2}\left(\frac{\tau_1}{\tau_0}\right)^2\right)+\mu_3^{(1)}\left(-x_1\frac{\tau_1}{\tau_0}+\frac{1}{3}(\frac{\tau_1}{\tau_0})^3\right)}(\tau_0-x_2\tau_1)^{\mu_0^{(2)}}\omega_{\tau}\\[0.8ex]
&=e^{\mu_1^{(1)}z-\mu_2^{(1)}\frac{1}{2}z^2-\mu_3^{(1)}\left(x_1z-\frac{1}{3}z^3\right)}(1-x_2z)^{\mu_0^{(2)}}\,dz=\eta({\bf x};z)\,dz.
\end{aligned}
\end{align*}
Note that the minors with the index sets $(1,3), (2,3), (2,4),(3,4)$  and $(3,5)$ are zero.
Here the vanishing minor with $(3,4)$  is a consequence of the Pl\"ucker relation, and
the $\zeta$ is a point of a cell with co-dimension three. The generalized Lauricella function is then given by
\[
\eta({\bf x};z)=(1-x_2z)^{-\epsilon_2}e^{\epsilon_1(x_1z-\frac{1}{3}z^3)+\epsilon_3z+\epsilon_4z^2}.
\]

\item[(7)] With $5=(5)$, we have
\begin{align*}
&\zeta=\begin{pmatrix}
1 & 0 &0& 0  & 0\\
0 & 1 &0& -x_1 &- x_2
\end{pmatrix}\qquad\text{with}\qquad x_1 x_2 \ne 0\\[0.8ex]
&\begin{aligned}\chi(\tau\zeta;\mu)\,\omega_{\tau}&=\tau_0^{\mu_0^{(1)}}e^{\mu_1^{(1)}\frac{\tau_1}{\tau_0}-\mu_2^{(1)}\frac{1}{2}\left(\frac{\tau_1}{\tau_0}\right)^2+\mu_3^{(1)}\left(-x_1\frac{\tau_1}{\tau_0}+\frac{1}{3}(\frac{\tau_1}{\tau_0})^3\right)+\mu_4^{(1)}\left(-x_2\frac{\tau_1}{\tau_0}+x_1\left(\frac{\tau_1}{\tau_0}\right)^2-\frac{1}{4}\left(\frac{\tau_1}{\tau_0}\right)^4\right)}\omega_{\tau}\\[0.8ex]
&=e^{\mu_1^{(1)}z-\mu_2^{(1)}\frac{1}{2}z^2-\mu_3^{(1)}\left(x_1z-\frac{1}{3}z^3\right)+\mu_4^{(1)}\left(-x_2z+x_1z^2-\frac{1}{4}z^4\right)}\,dz=\eta({\bf x};z)\,dz.
\end{aligned}
\end{align*}
Note that the minors with the index sets $(1,3), (2,3), (2,4), (2,5), (3,4), (3,5)$ and $(4,5)$ are zero.
This $\zeta$ is a point of a cell with co-dimension four.  The generalized Lauricella function is then given by
\[
\eta({\bf x};z)=e^{\epsilon_1(x_1z-\frac{1}{3}z^3)+\epsilon_2(x_2z-x_1z^2-\frac{1}{4}z^4)+\epsilon_3z+\epsilon_4z^2}.
\]
\end{itemize}

\subsection{The most degenerate case for $\text{Gr}(2,n+3)$}  \label{ex:Airy}
In the case of $\text{Gr}(2,n+3)$, the character $\chi$ associated to the partition $n+3=(n+3)$ can be calculated as follows.  A canonical form of $\zeta\in\text{Gr}(2,n+3)/H_{(n+3)}$ is expressed by
\[
\zeta=\begin{pmatrix} 
1& 0 & 0 & 0 & 0&\cdots & 0\\
0& 1 & 0 & -x_1&-x_2&\cdots & -x_n
\end{pmatrix} \qquad\text{with}\qquad \prod_{j=1}^nx_n\ne 0.
\]
Then we have
\[
\tau\zeta=(\tau_0,\tau_1,0,-x_1\tau_1,-x_2\tau_1,\ldots, -x_n\tau_1)
\]
and the corresponding character is
\[
\chi(\tau\zeta,\mu)=\tau_0^{\mu_0}\exp\left(\sum_{i=1}^{n+2}\mu_i\theta_i({\bf x};z)\right) \qquad\text{with}\qquad z=\frac{\tau_1}{\tau_0},
\]
where the sum in the exponent is given by
\[
\sum_{i=1}^{n+2}\mu_i\theta_i({\bf x};z)=\sum_{m=1}^n(-1)^m\left(\sum_{l=1}^{n-m+1}\mu_{l+m+1}x_l\right)\,z^m+\varphi(z),
\]
Here $\varphi(z)$ is the part depending only on $z$. We introduce the variables,
\[
y_m:=(-1)^m\sum_{l=1}^{n-m+1}\mu_{l+m+1}x_l,
\]
that is, we have simply
\[
\sum_{i=1}^{n+2}\mu_i\theta_i({\bf x};z)=\xi({\bf y};z)+\varphi(z)\qquad\text{with}\qquad\xi({\bf y};z):=\sum_{m=1}^ny_mz^m.
\]
Then choosing $\mu_0=-2$, we have
\begin{equation}\label{Airy-eta}
\eta({\bf y};z)\,dz=\chi(\tau\zeta,\mu)\,\omega_{\tau}=f(z)\exp\left(\sum_{m=1}^ny_mz^m\right)\, dz.
\end{equation}

%%%%%%%%%%%%%%%%%%%%%%%%%%%%
\section{Hydrodynamic systems associated to confluent Lauricella functions on $\text{Gr}(2,5)$}\label{sec:HD-Gr25}
As explained in the previous section, the number of free variables in $Z_{(i_1,\ldots,i_m)}$ is given by $n$. For example, the case $n=1$ has a single variable $x=x_1$, and the Lauricella
differential given by $\eta(x;z)dz=z^{-\epsilon_1}(1-z)^{-\epsilon_2}(1-xz)^{-\epsilon_3}dz$ gives 
the Gauss hypergeometric function,
\[
F_{Gauss}(x)=\int_{\Delta}z^{-\epsilon_1}(1-z)^{-\epsilon_2}(1-xz)^{-\epsilon_3}\,dz.
\]
The points $\{0,1,\frac{1}{x},\infty\}$ are \emph{regular} singular points of the hypergeometric equation.

The confluence from Gauss to Kummer can be directly obtained from the Gauss hypergeometric function by taking the limit $\delta\to 0$ with
\[
x=\delta y\qquad\text{and}\qquad \epsilon_3=\frac{1}{\delta}.
\]
The limit then gives
\[
\lim_{\delta\to 0}(1-xz)^{-\epsilon_3}=\lim_{\delta\to 0}(1-\delta yz)^{-\frac{1}{\delta}}=e^{yz}.
\]
The corresponding hypergeometric function, called the Kummer confluent hypergeometric function, is given by
\[
F_{Kummer}(y)=\int_{\tilde\Delta}z^{-\epsilon_1}(1-z)^{-\epsilon_2}e^{yz}\,dz.
\]
In this limiting process, the point $z=\infty$ is now an irregular singular point.  The partition $(2,1,1)$ can be considered as the multiplicity of the singular points $(\infty, 0, 1)$ of the system \eqref{Lauricella}.  That is, the limit gives $\frac{1}{x}\to\infty$.

%%%%%%%%%%%%%%%%%%%%%%%%%%%%%%%%%%%%%%%

\subsection{Confluent hydrodynamic systems}

Example \ref{ex:Gr25} in the previous section shows that there are, in fact, only three 
different confluent Lauricella-type functions.  They are of the following form:
\begin{itemize}
\item[(1)] For the cases with the partitions $5=(i_1,\ldots,i_p)$ with $2\le p\le 4$,
except $5=(3,2)$, the Lauricella type function has the form,
\[
\eta_1(x,y;z)=f(z)(1-xz)^{-\epsilon}e^{yz}
\]
with some function $f(z)$.
\item[(2)]  For the case with $5=(3,2)$, we have
\[
\eta_2(x,y;z)=g(z)e^{(y_1+ y_2)z}
\]
with a function $ g(z)$.
\item[(3)] This corresponds to the most degenerate case with $5=(5)$, and the Lauricella-type
function is
\[
\eta_3(x,y;z)=h(z)e^{y_1z+y_2z^2},
\]
with some $h(z)$.
\end{itemize}

The following subsections, we construct hydrodynamic type system associated to those two cases.

\subsubsection{Case (1)}
The Lauricella-type function $\eta_1(x,y;z)$ can be also obtained directly from the Lauricella function \eqref{Lauricella} by taking the limit $\delta\to 0$ with
\[
x_1=x,\qquad x_2=\delta y,\qquad \epsilon_1=\epsilon,\qquad \epsilon_2=\frac{1}{\delta}.
\]
The corresponding limit of the EPD equation \eqref{EPD} for $\eta_1(x,y;z)$ is
\begin{equation}\label{eta1}
x\frac{\partial^2 \eta_1}{\partial x\partial y}=\frac{\partial \eta_1}{\partial x} -\epsilon\frac{\partial \eta_1}{\partial y}
\end{equation}
which  is not of EPD type. 
The function $F(x,y)=\int_{\Delta}f(z)\eta_1(x,y;z)\,dz$ with a particular choice of $f(z)$ describes the confluence of Appell's
function $F_{D}$ to Kummer's confluent hypergeometric function (see subsection 8.7 in \cite{KT:06}).

Expanding the function $\eta_1(x,y;z)$ with $f(z)=1$ in terms of $z$ gives
\begin{align*}
\eta_1(x,y;z) &=\sum_{j=0}^{\infty}F^k_1(x,y)z^k,\qquad\text{with}\qquad  F^k_1(x,y)=\sum_{i+j=k}\frac{(\epsilon,i)}{i!\,j!}x^iy^j.
\end{align*}
where $(\epsilon,i):=\epsilon(\epsilon+1)\cdots(\epsilon+i-1)$.
The first few terms of $F^k_1(x,y)$ are given by
\[
F^1_1=\epsilon x+y,\qquad F^2_1=\frac{\epsilon(\epsilon+1)}{2}x^2+\epsilon xy+\frac{1}{2}y^2,
\]

We now define the generating function $\Phi_1({\bf t};x,y)$,
\[
\Phi_1({\bf t};x,y):=\sum_{k=0}^mt_kF^{k+1}_1(x,y).
\]
Note here that $\Phi_1({\bf t};x,y)$ also satisfies the equation \eqref{eta1}, the confluent EPD equation.
The critical point at $(x,y)=(u_1,u_2)$ is defined by
\[
\frac{\partial \Phi_1}{\partial x}\Big|_{(u_1,u_2)}=\frac{\partial \Phi_1}{\partial y}\Big|_{(u_1,u_2)}=0,
\]
which then give a deformation of the critical point ${\bf u}={\bf u}({\bf t})$.  One should also note 
from \eqref{eta1} that
at the critical point, we have
\[
\frac{\partial^2\Phi_1}{\partial x\partial y}\Big|_{(u_1,u_2)}=0.
\]
Then taking the derivatives of the critical equations with respect to $t_k$, we have
\[
\frac{\partial F^{k+1}_1}{\partial u_i}+\Phi^1_{i,i}\frac{\partial u_i}{\partial t_k}=0\qquad \text{for}\qquad i=1,2,
\]
where $\Phi^1_{1,1}:=\frac{\partial^2\Phi_1}{\partial x^2}\Big|_{(u_1,u_2)}$ and $\Phi^1_{2,2}:=\frac{\partial^2\Phi_1}{\partial y^2}\Big|_{(u_1,u_2)}$.
This system then gives the hydrodynamic type equations for ${\bf u}=(u_1,u_2)$,
\[
\frac{\partial u_i}{\partial t_k}=\lambda_i^k\frac{\partial u_i}{\partial x}\qquad\text{for}\qquad k=1,\ldots, m.,
\]
where the characteristic velocities are given by
\[
\lambda_{1}^{k}=\frac{1}{\epsilon}\frac{\partial F^{k+1}_1}{\partial x}\Big|_{(u_1,u_2)}\qquad\text{and}\qquad \lambda_{2}^{k}=\frac{\partial F^{k+1}_1}{\partial y}\Big|_{(u_1,u_2)}.
\]
That is, the functions $
u_{1}$ and $u_{2}$ are Riemann invariants for this confluent system.

Using the formula of $F^k_1$ above, one can show that 
\[
\frac{1}{\epsilon}\frac{\partial F^{k+1}_1}{\partial x}=\sum_{i+j=k}x^iF^j_1\qquad\text{and}\qquad \frac{\partial F^{k+1}_1}{\partial y}=F^{k}_1.
\]
The first two members with $k=1,2$ of this hierarchy are 
\begin{align*}
\frac{\partial u _{1}}{\partial t_{1}}&=((1+\epsilon)u_1+u_2)\frac{\partial u _{1}}{\partial t_0},\\
\frac{\partial u_2}{\partial t_{1}}&=(\epsilon u_1+u_2)\frac{\partial u_2}{\partial t_0}
\end{align*}
and
\begin{align*}
\frac{\partial  u_{1}}{\partial t_{2}} &=\left(\frac{1}{2}(\epsilon +1)(\epsilon+2)u _{1}^{2}+(\epsilon+1)u _{1}u_2+\frac{1}{2}u_2^2\right)\frac{\partial u_{1}}{\partial t_0} \\
\frac{\partial u_2}{\partial t_{2}} &=\left(\frac{1}{2}\epsilon(\epsilon
+1)u_{1}^{2}+\epsilon u _{1}u_2+\frac{1}{2}u_2^2\right)\frac{\partial u_2}{\partial t_0}.  
\end{align*}
The commutativity of the hierarchy is immediate consequence of the fact that
all $\eta_1(x,y;z)$ obey the equation \eqref{eta1}.

At $\epsilon=\frac{1}{2}$ and $\epsilon=-\frac{1}{2}$ the
above systems give us the first confluence limits of the dispersionless NLS and dispersionless Toda
equations, respectively.

\subsubsection{Case (2)}  The Lauricella type function $\eta_2(y_1,y_2;z)$ can be obtained by the limits
$\delta \to 0$ with
\[
x_1=\delta y_1,\qquad\ x_2=\delta y_2,\qquad \epsilon_1=\epsilon_2=\frac{1}{\delta}.
\]
Then the EPD equation \eqref{EPD} becomes simply
\[
0=\frac{\partial \eta_2}{\partial y_1}-\frac{\partial \eta_2}{\partial y_2}.
\]
This means that we have essentially one variable $y:=y_1+y_2$, and the corresponding hydrodynamic
type system is just the Burgers-Hoph equations \cite{KK:02}.  That is, we have the generating function,
\[
\Phi_2({\bf t};y):=\sum_{k=0}^mt_kF^{k+1}_2(v)\qquad \text{with}\qquad F^k_2:=\frac{1}{k!}v^k,
\]
where $F^k_2$ is the coefficient of the expansion of $\eta_2(y_1,y_2;z)=\sum_{k=0}^\infty F^k_2(y)z^k$.
Then the dynamics of the critical point $\frac{\partial\Phi_2}{\partial y}|_{{y=v}}=0$ is given by the Burgers-Hoph hierarchy,
\[
\frac{\partial v}{\partial t_k}=F^k_2(v)\frac{\partial v}{\partial t_0}=\frac{v^k}{k!}\frac{\partial v}{\partial t_0}.
\]
Thus we have a \emph{reduceble} system with just one variable.

%%%%%%%%%%%%

\subsubsection{Case (3)}
We will discuss the general case of the most degenerate confluence with the single partition $n=(n)$
in the next section.  Here we give some details of the most degenerate case for $\text{Gr}(2,5)$.
First we observe that the confluence process which takes the 
Lauricella function $\eta(x_1,x_2;z)=(1-x_{1}z)^{-\epsilon _{1}}(1-x_{2}z)^{-\epsilon
_{2}}$ to the function $\eta_3(y_1,y_2;z)=e^{y_1z+y_2z^{2}}$ corresponds to the limit $\delta\to0$ with
\begin{equation}\label{expansion}
x_{1}=\delta\sqrt{y_2}+\frac{1}{2}\delta^2 y_1,\qquad x_{2}=-\delta\sqrt{y_2}+\frac{1}{2}\delta^2 y_1,\qquad \epsilon _{1}=\epsilon _{2}=\frac{1}{\delta^2}.
\end{equation}
This transformation explicitly shows that the limit gives the confluence of two singular points
$\frac{1}{x_1},\frac{1}{x_2}\to\infty$.  This formula will be generalized to the case with $n$ variables in section
\ref{sec:CHD}.

The transformation \eqref{expansion} converts the EPD equation \eqref{EPD} with two variables into
the heat equation
\begin{equation}\label{eta_2}
\frac{\partial \eta_3}{\partial y_2}=\frac{\partial ^{2}\eta_3}{\partial ^{2}y_1}.
\end{equation}
which in contrast to the previously considered cases contains second order
derivative instead of the mixed derivative. Such equations have appeared in
the paper \cite{GRS:88} in connection with the multi-dimensional analogs of Airy
function.

We now construct a hydrodynamic-type system which
describes the deformations of critical points of a generating function $\Phi_3$ associated to
$\eta_3(y_1,y_2;z)$.
First we expand the Lauricella-type function $\eta_3(y_1,y_2;z)$,
\[
\eta_3(y_1,y_2;z)=e^{y_1z+y_2z^2}=\sum_{k=0}^\infty F^k_3(y_1,y_2)z^k,
\]
where $F^k(y_1,y_2)_3$ is given by the elementary Schur polynomial $p_k(y_1,y_2)$, i.e.
\[
F^k_3(y_1,y_2)=p_k(y_1,y_2):=\sum_{j_1+2j_2=k}\frac{y_1^{j_1}y_2^{j_2}}{j_1!\,j_2!}.
\]
Notice that these polynomial satisfies 
\[
\frac{\partial p_k}{\partial y_1}=p_{k-1},\qquad\frac{\partial p_k}{\partial y_2}=p_{k-2}
\]
where $p_0=1$ and $p_n=0$ for $n<0$.

Then we define the generating function $\Phi_3({\bf t};y_1,y_2)$ as
\[
\Phi_3({\bf t};y_1,y_2)=\sum_{k=0}^mt_kF^{k+1}_3(y_1,y_2)=\sum_{k=0}^mt_kp_{k+1}(y_1,y_2).
\]
The equations for the critical point at $(y_1,y_2)=(u_1,u_2)$ of $\Phi_3({\bf t};y_1,y_2)$ are given by
\begin{equation}\label{criticalGr25}
\frac{\partial\Phi_3}{\partial y_1}\Big|_{(u_1,u_2)}=\sum_{k=0}^mt_kp_k(u_1,u_2)=0,\qquad\text{and}\qquad \frac{\partial\Phi_3}{\partial y_2}\Big|_{(u_1,u_2)}=\sum_{k=1}^mt_kp_{k-1}(u_1,u_2)=0.
\end{equation}
Now taking the derivatives with respect to $t_k$, we have
\[
p_k+ \Phi_{1,2}^3\frac{\partial u_2}{\partial t_k}=0,\qquad
p_{k-1}+\Phi_{2,1}^3\frac{\partial u_1}{\partial t_k}+\Phi_{2,2}^3\frac{\partial u_2}{\partial t_k}=0,
\]
where we denote
\[
\Phi_{1,2}^3=\Phi_{2,1}^3=\sum_{k=2}^mt_kp_{k-2},\qquad \Phi_{2,2}^3:=\sum_{k=3}^m t_kp_{k-3}.
\]
Together with the equations for $k=0$, one can eliminate  $\Phi_{i,j}^3$ and obtains the following hydrodynamic type equations for ${\bf u}=(u_1,u_2)$,
\[
\frac{\partial {\bf u}}{\partial t_k}={\sf A}_k\frac{\partial {\bf u} }{\partial t_0}\qquad \text{with}\qquad
{\sf A}_k:=\begin{pmatrix} p_k& p_{k-1}\\
0&p_k\end{pmatrix}
\]
The first two systems are given by
\begin{align*}
\frac{\partial}{\partial t_1}\begin{pmatrix}u_1\\ u_2\end{pmatrix}&=\begin{pmatrix} u_1& 1\\
0&u_1\end{pmatrix}
\frac{\partial }{\partial t_0}\begin{pmatrix}u_1\\u_2\end{pmatrix},\\[0.5ex]
\frac{\partial}{\partial t_2}\begin{pmatrix}u_1\\ u_2\end{pmatrix}&=\begin{pmatrix} \frac{1}{2}u_1^2+u_2& u_1\\
0&\frac{1}{2}u_1^2+u_2\end{pmatrix}
\frac{\partial }{\partial t_0}\begin{pmatrix}u_1\\u_2\end{pmatrix}.
\end{align*}
It is a direct check that all these flows commute. So one has an
integrable hierarchy of the hydrodynamic type systems. We like to emphasize that
the variables $(u_1,u_2)$ are \emph{not} Riemann invariants in contrast to other
systems considered above. This fact is the consequence of the form of the
equation for $\eta_3$ which contains the second order derivative $\frac{\partial^2\eta_3}{\partial y_1^2}$. 

The critical equations \eqref{criticalGr25} are hodograph equations for these systems. They have a matrix form discussed in the paper \cite{K:89}.  That is, we have 
\[
\sum_{k=0}^mt_k\,{\sf A}_k=0.
\]

In the section \ref{sec:CHD}, we will discuss the general case of the most degenerate confluent system.

%%%%%%%%%%%%%%%%%%%%%%%%%%%%%%%%%%%%%%%%%%%%

\section{Confluent hydrodynamic type systems of mixed case for $\text{Gr}(2,6)$}\label{sec:CHD-Gr26}
There are three different and irreducible types of the generalized Lauricella functions for $\text{Gr}(2,6)$, and
they are
\begin{align*}
\eta_*^1(x_1,x_2,y;z)&=(1-x_1z)^{-\epsilon_1}(1-x_2z)^{-\epsilon_2}e^{yz},\\
\eta_*^2(x,y_1,y_2;z)&=(1-xz)^{-\epsilon}e^{y_1z+y_2z^2},\\
\eta_*^3(y_1,y_2,y_3;z)&=e^{y_1z+y_2z^2+y_3z^3}.
\end{align*}
The hydrodynamic type system corresponding to the first one is given by a Riemann invariant form (diagonalizable case). The system corresponding to the third one will be discussed in section \ref{sec:CHD} for arbitrary $n$.
Here we consider the second case
which is given by the confluence for $\text{Gr}(2,6)$ with the partition $6=(5,1)$.

In the similar transformation as \eqref{expansion} for the case (2) of $\text{Gr}(2,5)$, one can 
obtain $\eta^2_*(x,y_1,y_2;z)$ from the Lauricella function $\eta(x_1,x_2,x_3)=\prod_{i=1}^3(1-x_iz)^{-\epsilon_i}$ in the limit $\delta\to0$ with $x_1=x, \epsilon_1=\epsilon$ and
\[
x_2=\delta\sqrt{y_2}+\delta^2\frac{y_1}{2},\qquad x_3=-\delta\sqrt{y_2}+\delta^2\frac{y_1}{2},\qquad \epsilon_2=\epsilon_3=\frac{1}{\delta^2}.
\]
In this limit, the EPD system \eqref{EPD}becomes
\begin{equation}\label{EPD-Gr26}
\frac{\partial\eta_*^2}{\partial y_2}=\frac{\partial^2\eta_*^2}{\partial y_1^2},\qquad
x\frac{\partial^2\eta_*^2}{\partial x\partial y_1}=\frac{\partial \eta_*^2}{\partial x}-\epsilon\frac{\partial \eta_*^2}{\partial y_1},\qquad x^2\frac{\partial^2\eta^2_*}{\partial x\partial y_2}=\frac{\partial\eta^2_*}{\partial x}-\epsilon\frac{\partial \eta^2_*}{\partial y_1}-\epsilon x\frac{\partial \eta^2_*}{\partial y_2}.
\end{equation}
Note here that the last equation can be derived from the first two equations.

Expanding the $\eta^2_*$ function in terms of $z$, we have
\begin{align*}
\eta_*^2&=\left(\sum_{k=0}^{\infty}p_k(y_1,y_2)z^k\right)\left(\sum_{j=0}^{\infty}q_j(x)z^j\right)=\sum_{n=0}^{\infty}\left(\sum_{k=0}^n p_{n-k}(y_1,y_2)q_{k}(x)\right)z^n,
\end{align*}
where $p_k$ is the elementary Schur polynomial, and
\[
q_j(x)=\frac{\epsilon(\epsilon+1)\cdots(\epsilon+j-1)}{j!}x^j\qquad\text{for}\quad  j=1,2,\ldots,
\]
with $q_0=1$.  
Then we define
\begin{equation}\label{generator-Gr26}
\Phi({\bf t}; x,y_1,y_2)=\sum_{k=0}^\infty t_kF^{k+1}(x,y_1,y_2)\qquad \text{with}\qquad F^k=\sum_{j=0}^{k}p_{k-j}(y_1,y_2)q_{j}(x).
\end{equation}
where $F^0=1$.
(For $x=0$, this is just the case of $\text{Gr}(2,5)$ with the partition $5=(5)$.)
Note the identities,
\[
\frac{\partial F^{k+1}}{\partial y_1}=F^{k},\qquad  \frac{\partial F^{k+1}}{\partial y_2}=F^{k-1}\qquad\text{and}\qquad
\frac{\partial F^{k+1}}{\partial x}=\epsilon\sum_{j=0}^kF^{k-j}x^j.
\]
The last equation can be shown by the induction, i.e.
\[
\sum_{k=1}^n(\epsilon+k-1)p_{n-k}q_{k-1}=\epsilon\sum_{k=0}^{n-1}p_{n-1-k}q_k+\sum_{k=2}^n(k-1)p_{n-k}q_{k-1},
\]
with $(k-1)q_{k-1}=(\epsilon+k-2)xq_{k-2}=\epsilon xq_{k-2}+(k-2)xq_{k-2}$.

Differentiating $\Phi$ with respect to $y_1, y_2$ and $x$, we have
\begin{align*}
&\frac{\partial \Phi}{\partial y_1}=\sum_{j=0}^mt_jF^j,\qquad \frac{\partial \Phi}{\partial y_2}=\sum_{j=1}^mt_jF^{j-1},\\
&\frac{\partial \Phi}{\partial x}=\epsilon\sum_{k=0}^mt_k\left(\sum_{l=0}^kF^{k-l}x^l\right)=\epsilon\frac{\partial \Phi}{\partial y_1}+\epsilon x\frac{\partial \Phi}{\partial y_2}+\epsilon x^2\sum_{j=2}^m t_j\left(\sum_{l=2}^jF^{j-l}x^{l-2}\right)
\end{align*}
Then at the critical point $(x,y_1,y_2)=(u,v_1,v_2)$ of $\Phi$, we have 
\begin{align}\label{Mix-critical}
&\sum_{j=0}^mt_jF^j=0,\qquad \sum_{j=1}^mt_jF^{j-1}=0,\qquad \sum_{j=2}^m t_j\left(\sum_{l=2}^jF^{j-l}x^{l-2}\right)=0.
\end{align}
 One should note that the $\Phi({\bf t};x,y_1,y_2)$ satisfies the system \eqref{EPD-Gr26} given by
 the confluence limit of the EPD system \eqref{EPD}.

Now taking the derivatives of \eqref{Mix-critical} with respect to $t_k$, we have
\begin{align}\label{HDequation}
 F^k+\Phi_{1,2}\frac{\partial v_2}{\partial t_k}=0,\quad F^{k-1}+\Phi_{2,1}\frac{\partial v_1}{\partial t_k}+\Phi_{2,2}\frac{\partial v_2}{\partial t_k}=0,\quad &G^k+\Phi_{0,0}\frac{\partial u}{\partial t_k}=0
 \end{align}
where we define $\Phi_{i,j}=\frac{\partial^2\Phi}{\partial y_i\partial y_j}\Big|_{(u,v_1,v_2)}$,
$\Phi_{0,0}:=\frac{\partial^2\Phi}{\partial x^2}\Big|_{(u,v_1,v_2)}$, and 
\[
G^k:=\sum_{j=0}^k F^{k-j}u^j.
\]
Here we have used the fact that
$\Phi_{1,1}=\Phi_{0,1}=\Phi_{0,2}=0$, 
since the corresponding second derivatives of $\Phi$ are expressed by the linear combinations of
the first derivatives, see \eqref{EPD-Gr26}.

Let us define the following matrix,
\[
{\sf M}=\begin{pmatrix}\Phi_{0,0}&0&0\\
0&\Phi_{1,2} &\Phi_{2,2}\\
0&0&\Phi_{1,2}
\end{pmatrix}
\]
Then \eqref{HDequation} gives the following matrix equation for ${\bf u}=(u,v_1,v_2)^T$,
\begin{equation}\label{Mform}
 {\sf M}\frac{\partial {\bf u}}{\partial t_k}=-(G^k,F^{k-1},F^k)^T.
\end{equation}
Now we have the following proposition.
\begin{proposition}
The critical point at $(x,y_1,y_2)=(u,v_1,v_2)$ of the generating function $\Phi({\bf t};x,y_1,y_2)$ 
in \eqref{generator-Gr26} satisfies the hydrodynamic type systems,
\[
\frac{\partial{\bf u}}{\partial t_k}={\sf A}_k\frac{\partial{\bf u}}{\partial x},
\qquad\text{with}\qquad {\sf A}_k=\begin{pmatrix}
G^k &0& 0\\
0& F^k & F^{k-1}\\
0& 0& F^k
\end{pmatrix}.
\]
\end{proposition}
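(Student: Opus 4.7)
The plan is to leverage the matrix equation \eqref{Mform} already derived, $\mathsf{M}\,\partial \mathbf{u}/\partial t_k = -(G^k, F^{k-1}, F^k)^T$, and to identify $\mathsf{A}_k$ as exactly the linear operator that maps the $k=0$ right-hand side to the general-$k$ right-hand side while commuting with $\mathsf{M}$. The conclusion will then follow by inverting $\mathsf{M}$ once and sliding $\mathsf{A}_k$ through.

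First, I would specialize the matrix equation to $k=0$. Using the conventions $F^0=1$, $F^{-1}=0$, and $G^0=F^0=1$, the right-hand side collapses to $-(1,0,1)^T$, so
\[
\mathsf{M}\,\frac{\partial \mathbf{u}}{\partial t_0} = -(1,0,1)^T.
\]
Generically $\mathsf{M}$ is invertible, its determinant being $\Phi_{0,0}\Phi_{1,2}^2$, so $\partial\mathbf{u}/\partial t_0 = -\mathsf{M}^{-1}(1,0,1)^T$. Next, a direct block-matrix calculation shows that $\mathsf{A}_k$ and $\mathsf{M}$ commute: both are block diagonal with a $1\times 1$ scalar block and a $2\times 2$ upper-triangular block of the form $aI+bN$ (with $N$ the nilpotent shift), and any two matrices of that form commute. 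A one-line evaluation gives $\mathsf{A}_k(1,0,1)^T=(G^k,F^{k-1},F^k)^T$. Combining these yields
\[
\frac{\partial \mathbf{u}}{\partial t_k} = -\mathsf{M}^{-1}(G^k,F^{k-1},F^k)^T = -\mathsf{M}^{-1}\mathsf{A}_k(1,0,1)^T = -\mathsf{A}_k\mathsf{M}^{-1}(1,0,1)^T = \mathsf{A}_k\,\frac{\partial \mathbf{u}}{\partial t_0},
\]
which is the asserted identity (the derivative on the right in the statement being naturally interpreted as $\partial/\partial t_0$, consistent with the rest of the paper).

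The step that actually needs checking is the derivation of \eqref{HDequation}, in particular the vanishing of $\Phi_{1,1}$, $\Phi_{0,1}$, $\Phi_{0,2}$ at the critical point. These all follow from the confluent EPD system \eqref{EPD-Gr26} applied to $\Phi$ and then evaluated on the critical equations \eqref{Mix-critical}: for instance, $u\,\Phi_{0,1}=\partial_x\Phi-\epsilon\,\partial_{y_1}\Phi$ vanishes at the critical point, and similarly for $\Phi_{0,2}$ via the third EPD-type relation. The third line of \eqref{HDequation}, $G^k+\Phi_{0,0}\,\partial u/\partial t_k=0$, comes from differentiating the critical equation $\sum_j t_j G^j|_{\mathrm{crit}}=0$ with respect to $t_k$ and using the identity $\partial_x F^{k+1}=\epsilon\sum_{l=0}^k F^{k-l}x^l$ recorded just before \eqref{generator-Gr26}.

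The main potential obstacle is purely bookkeeping: confirming the boundary conventions ($F^{-1}=0$, $G^0=1$) so that the $k=0$ data really produces the vector $(1,0,1)^T$, and tracking the $\epsilon$ carefully in the third row so that the matrix $\mathsf{A}_k$ emerges with scalar block $G^k$ rather than $\epsilon G^k$ or similar. Once the reduction of the cross-derivatives and the $k=0$ specialization are in place, the proposition is immediate from the commuting block structure.
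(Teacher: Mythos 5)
Your proof is correct and takes essentially the same route as the paper's own proof: specialize \eqref{Mform} to $k=0$ to get ${\sf M}\,\partial{\bf u}/\partial t_0=-(1,0,1)^T$, factor the right-hand side as $(G^k,F^{k-1},F^k)^T={\sf A}_k(1,0,1)^T$, and use the commutativity of ${\sf A}_k$ with ${\sf M}$ to eliminate ${\sf M}$. Your extra checks (the determinant $\Phi_{0,0}\Phi_{1,2}^2$ for invertibility, the block-structure argument for commutativity, and the vanishing of $\Phi_{1,1},\Phi_{0,1},\Phi_{0,2}$ behind \eqref{HDequation}) simply make explicit what the paper leaves implicit.
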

\begin{proof}
For $k=0$, \eqref{Mform} gives
\[
{\sf M}\frac{\partial {\bf u}}{\partial t_0}=-(1,0,1)^T.
\]
Then note that the right hand side of \eqref{Mform} is given by
\[
(G^k,F^{k-1},F^k)^T={\sf A}_k(1,0,1)^T.
\]
Now eliminating the matrix ${\sf M}$ from those $t_0$ and $t_k$ equations, 
and noting that the matrices ${\sf A}_k$ commute with ${\sf M}$, we obtain the system.
\end{proof}

For example, the first flow of the hierarchy in this proposition is given by
\[
\frac{\partial }{\partial t_1}\begin{pmatrix} u\\ v_1\\ v_2\end{pmatrix}=
\begin{pmatrix} 2u+v_1 & 0 & 0\\
0 & u+v_1 & 1\\
0 & 0 & u+v_1
\end{pmatrix}\frac{\partial}{\partial t_0}\begin{pmatrix} u\\ v_1\\ v_2\end{pmatrix}.
\]

%%%%%%%%%%%%%%%%%%%%%%%%%%%%%%%%%%%%%%%%%%

\section{Confluent hydrodynamic type systems for the most degenerate cases}\label{sec:CHD}
As shown in example \ref{ex:Airy}, the confluent Lauricella type function (the character)
is given by the exponential form, 
\begin{equation}\label{potential}
\eta_*({\bf y};z)=\exp\left(\sum_{i=1}^n y_iz^i\right)=\sum_{i=0}^{\infty}p_i({\bf y})z^i,
\end{equation}
This Lauricella function can be directly obtained from the original Lauricella function 
$\eta({\bf x};z)=\prod_{j=1}^n(1-x_jz)^{-\epsilon_j}$ by the following limit $\delta\to 0$:
First take all $\epsilon_j=\frac{1}{\delta^n}$. We then note
\[
\prod_{j=1}^n(1-x_jz)=\sum_{i=0}^n(-1)^i\sigma_i({\bf x})z^i\qquad\text{with}\qquad \sigma_i({\bf x})=\sum_{1\le j_1<\cdots<j_i\le n}x_{j_1}\cdots x_{j_i}.
\]
Now consider the $n$-th degree polynomial of $x$,
\begin{equation}\label{polynomial}
x^n=\sum_{i=0}^{n-1}\delta^ny_{n-i}x^i\qquad\text{i.e.}\quad \delta^ny_i=(-1)^{i+1}\sigma_i({\bf x}),
\end{equation}
This means that we have the following equation which we take the limit $\delta\to 0$,
\[
\prod_{j=1}^n(1-x_jz)^{-\epsilon_j}=\left(1-\delta^n\sum_{i=1}^ny_iz^i\right)^{-\frac{1}{\delta^n}}\quad\longrightarrow\quad \exp\left(\sum_{i=1}^ny_iz^i\right).
\]
To find those $x_i$'s which are the roots of \eqref{polynomial}, we look for the root in a series form,
\begin{equation}\label{transform}
x_i=\sum_{j=1}^n\delta^ja_{i,j} +\mathcal{O}(\delta^{n+1})\qquad\text{for}\qquad i=1,\ldots, n.
\end{equation}
Then it is easy to see that the expansion is well defined and one can find the coefficients $a_{i,j}$ uniquely in the perturbation expansion.  For example, in the case of $n=4$, we have
\[
a_{i,1}=\omega_4^iy_4^{\frac{1}{4}},\qquad a_{i,2}=\frac{y_3}{4a_{i,1}^2},\qquad a_{i,3}=-\frac{y_3^2}{32a_{i,1}^5}+\frac{y_2}{4a_{i,1}},\qquad a_{i,4}=\frac{y_1}{4}.
\]
where $\omega_4=\exp(\frac{2\pi \sqrt{-1}}{4})$, the fourth root of unity.
Then the transformations for the variables $x_i$ are given by
\begin{align*}
x_i&=\delta \omega_4^iy_4^{\frac{1}{4}}+\delta^2 \omega_4^{2i}\frac{y_3y_4^{-\frac{1}{2}}}{4}+\delta^3 \omega_4^{4-i}\left(-\frac{1}{32}y_3^2y_4^{-\frac{5}{4}}+\frac{1}{4}y_2y_4^{-\frac{1}{4}}\right)+\delta^4\frac{y_1}{4}+\mathcal{O}(\delta^5).\\
\end{align*}
One should note that the higher order terms of $\mathcal{O}(\delta^{n+1})$ in the series \eqref{transform}
will vanish in the limit $\delta\to 0$, that is, we can use the truncated form of the transformations $x_i$ up to $\mathcal{O}(\delta^n)$ (e.g. see \eqref{expansion} for the case of $n=2$).

\begin{remark}
We would like to mention that the expression of $x_i$'s in \eqref{transform} has some freedom, and 
for example, one can also find  $x_i$'s in a triangular form, i.e. $a_{i,j}=0$ for $j>i$.
For example, at $n=2$, we have
\[
x_1=\delta y_2^{\frac{1}{2}},\qquad x_2=-\delta y^{\frac{1}{2}}+\delta^2 y_1,
\]
and at $n=3$, we have
\[
x_1=\delta y_3^{\frac{1}{3}},\quad x_2=\delta \omega_3 y_3^{\frac{1}{3}}+\delta^2 qy_2y_3^{-\frac{1}{3}},\quad  x_3=\delta \omega_3^2y_3^{\frac{1}{3}}-\delta^2 qy_2y_3^{-\frac{1}{3}}+\delta^3 y_1,
\]
where $\omega_3=\exp(\frac{2\pi\sqrt{-1}}{3})$ and $q=\omega_3/(\omega_3^2-1)$.
Such transformation will be discussed in more details elsewhere.
\end{remark}

The coefficients $p_i({\bf y})$ in \eqref{potential} are sometime referred to as the elementary Schur polynomials which are given by
\[
p_i({\bf y})=\sum_{j_1+2j_2+\cdots+nj_n=i}\frac{y_1^{j_1}y_2^{j_2}\cdots y_n^{j_n}}{j_1!\,j_2!\,\cdots\, j_n!}
\]
The first few polynomials are
\[
p_0=1,\qquad p_1=y_1,\qquad p_2=y_2+\frac{1}{2}y_1^2,\qquad p_3=y_3+y_1y_2+\frac{1}{3}y_1^3.
\]
Notice that $p_i({\bf y})$ satisfy
\[
\frac{\partial p_i}{\partial y_j}=p_{i-j}\qquad\text{with}\qquad p_m=0\quad\text{if}\quad m<0.
\]
Now we define the function,
\[
\Phi({\bf t};{\bf y}):=\sum_{k=0}^\infty t_kF^{k+1}({\bf y})\qquad \text{where}\qquad F^k({\bf y})=p_{k}({\bf y}).
\]
Then we have the following Lemma on the equations for $\Phi$ which can be considered as 
a degenerate EPD system.
\begin{lemma}\label{diffusion}
For each $r$, we have
\[
\frac{\partial \Phi}{\partial y_r}=\frac{\partial^2\Phi}{\partial y_i\partial y_j}
\qquad\text{for any}\quad i+j=r.
\]
\end{lemma}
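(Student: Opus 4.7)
The plan is to reduce the claim about $\Phi$ to an identity on the elementary Schur polynomials $p_k({\bf y})$, which in turn is an immediate consequence of differentiating the generating identity \eqref{potential}. Since $\Phi({\bf t};{\bf y}) = \sum_k t_k p_{k+1}({\bf y})$ is a linear combination (with ${\bf t}$-coefficients that are independent of ${\bf y}$) of the $p_{k+1}$'s, any linear PDE in ${\bf y}$ that each $p_k$ satisfies will automatically be inherited by $\Phi$. So the whole lemma reduces to showing
\[
\frac{\partial p_k}{\partial y_r} \;=\; \frac{\partial^2 p_k}{\partial y_i \partial y_j}\qquad\text{whenever }\;i+j=r,
\]
for all $k$.

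First I would establish this identity at the level of the generating function $\eta_*({\bf y};z) = \exp\bigl(\sum_{i=1}^n y_i z^i\bigr)$. A single differentiation yields $\frac{\partial \eta_*}{\partial y_r} = z^r \eta_*$, and a mixed double differentiation gives $\frac{\partial^2 \eta_*}{\partial y_i \partial y_j} = z^{i+j}\eta_*$. When $i+j=r$ these two expressions coincide. Expanding $\eta_* = \sum_k p_k({\bf y})z^k$ and matching coefficients of $z^k$ on both sides then produces the polynomial identity above. (This can also be read off directly from the relation $\frac{\partial p_k}{\partial y_j} = p_{k-j}$ recalled just before the lemma, since both sides equal $p_{k-r}$.)

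Applying the identity termwise to $\Phi = \sum_{k\ge 0} t_k p_{k+1}$, I get
\[
\frac{\partial \Phi}{\partial y_r} \;=\; \sum_{k\ge 0} t_k p_{k+1-r} \;=\; \sum_{k\ge 0} t_k \frac{\partial^2 p_{k+1}}{\partial y_i \partial y_j} \;=\; \frac{\partial^2 \Phi}{\partial y_i \partial y_j},
\]
which is what was to be shown. There is no real obstacle here: the only thing to be careful about is that the identity only depends on $i+j=r$ and not on the individual values, and that we allow $i=0$ or $j=0$ to be excluded (since $y_0$ does not appear), but this is not an issue because any decomposition $r=i+j$ with $i,j\ge 1$ gives the same $p_{k+1-r}$. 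The statement is thus a one-line consequence of the structure of elementary Schur polynomials.
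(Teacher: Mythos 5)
Your proof is correct and follows essentially the same route as the paper, which states the lemma without an explicit proof precisely because it is an immediate consequence of the relation $\frac{\partial p_i}{\partial y_j}=p_{i-j}$ (with $p_m=0$ for $m<0$) recorded just before the lemma, applied termwise to $\Phi=\sum_k t_k p_{k+1}$. Your generating-function derivation of that relation via $\frac{\partial \eta_*}{\partial y_r}=z^r\eta_*$ is the standard justification and adds nothing beyond what the paper already assumes.
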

Now we consider the critical point ${\bf y}={\bf u}$ of $\Phi({\bf t};{\bf y})$, i.e.
\begin{equation}\label{critical1}
\frac{\partial \Phi}{\partial y_i}\Big|_{{\bf y}={\bf u}}=0\qquad\text{for}\qquad i=1,\ldots, n.
\end{equation}

We write
\[
\Phi_i:=\frac{\partial \Phi}{\partial y_i}\Big|_{{\bf y}={\bf u}},\qquad \Phi_{i,j}:=\frac{\partial^2\Phi}{\partial y_i\partial y_j}\Big|_{{\bf y}={\bf u}}.
\]
Note that from Lemma \ref{diffusion}, we have
\[
\Phi_{i,j}=\Phi_{i+j}=0\qquad \text{if}\quad i+j\le n.
\]
Then differentiating \eqref{critical1} with respect to $t_k$ for $k=0,1,\ldots,n$, we have the following Lemma.
\begin{lemma}\label{triangular}
\[
\begin{pmatrix}
\Phi_{n,1}&\Phi_{n,2}&\cdots &\Phi_{n,n}\\
0&\Phi_{n-1,2}&\cdots &\Phi_{n-1,n}\\
\vdots &\vdots & \ddots &\vdots\\
0&0&\cdots &\Phi_{1,n}
\end{pmatrix}\,\frac{\partial}{\partial t_k}\begin{pmatrix}u_1\\u_2\\\vdots\\u_n\end{pmatrix}=-\begin{pmatrix}
F^k_{n-1}\\ F^k_{n-2}\\ \vdots\\ F_0^k\end{pmatrix}
\]
where $F^k_j=\frac{\partial F^k}{\partial y_j}|_{{\bf y}={\bf u}}=p_{k-j}({\bf u})$.
\end{lemma}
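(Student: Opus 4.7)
The plan is to differentiate the critical equations \eqref{critical1} with respect to each flow parameter $t_k$ and then use Lemma \ref{diffusion} together with the critical conditions to read off the triangular block of vanishing entries. First I would rewrite the critical equations as
\[
\Phi_i = \sum_{j=0}^{\infty} t_j\, p_{j+1-i}({\bf u}({\bf t})) = 0, \qquad i = 1,\ldots,n,
\]
using $\partial F^{j+1}/\partial y_i = p_{j+1-i}$, and then apply $\partial/\partial t_k$ treating the $t_j$'s as independent and ${\bf u}$ as a function of ${\bf t}$. The chain rule gives
\[
p_{k+1-i}({\bf u}) + \sum_{l=1}^{n} \Phi_{i,l}\,\frac{\partial u_l}{\partial t_k} = 0,
\]
where the second term collapses to $\Phi_{i,l}$ because $\sum_j t_j\,\partial^2 F^{j+1}/\partial y_i \partial y_l|_{{\bf y}={\bf u}} = \Phi_{i,l}$. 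Recognizing the inhomogeneous term as $F^k_{i-1} = p_{k-(i-1)}({\bf u})$ produces $n$ scalar equations indexed by $i$.

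Next I would reorder the equations so that $i$ runs from $n$ down to $1$. The right-hand sides then become $F^k_{n-1}, F^k_{n-2},\ldots,F^k_0$, matching the column vector in the statement. The triangular structure of the coefficient matrix follows by combining two facts: by Lemma \ref{diffusion}, $\Phi_{i,l} = \Phi_{i+l}$ whenever $i+l \le n$, and by \eqref{critical1}, $\Phi_r = 0$ for every $r = 1,\ldots,n$. Consequently $\Phi_{i,l} = 0$ whenever $i+l \le n$. In the row indexed by $i = n-r+1$ (the $r$-th row from the top) this condition reads $l \le r-1$, which produces exactly the strictly lower-triangular block of zeros displayed in the lemma.

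The only real obstacle here is the careful bookkeeping of three different indexings running simultaneously, namely the row/column labels in the displayed matrix, the index $i$ of the critical equation, and the degree shift between $F^k_j$ and $p_{k-j}$; verifying the alignment of the three is what makes the argument look triangular rather than merely square. Once these indexings are lined up, no additional input beyond Lemma \ref{diffusion}, the critical equations, and the elementary Schur-polynomial identity $\partial p_j/\partial y_i = p_{j-i}$ is needed, and no convergence or nondegeneracy issue intervenes at this stage (invertibility of the triangular matrix, which would be needed to actually solve for $\partial {\bf u}/\partial t_k$, is a separate matter addressed afterwards).
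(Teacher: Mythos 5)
Your proposal is correct and follows essentially the same route as the paper, which obtains the lemma precisely by differentiating the critical equations \eqref{critical1} with respect to $t_k$ and invoking the vanishing $\Phi_{i,l}=\Phi_{i+l}=0$ for $i+l\le n$ (Lemma \ref{diffusion} plus the critical conditions); your write-up merely makes explicit the index bookkeeping, row reordering $i=n,\ldots,1$, and the identification $p_{k+1-i}=F^k_{i-1}$ that the paper leaves implicit.
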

Notice that each (super) $i$th-diagonal of the coefficient matrix has the same entry $\Phi_{n+i+1}$
which we assume to be nonzero.  Then we have the following Proposition.
\begin{proposition}\label{hydrodynamic}
Assume that $\Phi_{i,j}\ne0$ for $i+j>n$.  Then we have
\[
\frac{\partial }{\partial t_k}\begin{pmatrix}u_1\\u_2\\\vdots\\u_n\end{pmatrix}=
\begin{pmatrix}
F^k_0&F^k_1&\cdots &F^k_{n-1}\\
0&F^k_0&\cdots &F^k_{n-2}\\
\vdots &\vdots & \ddots &\vdots\\
0&0&\cdots &F^k_0
\end{pmatrix}\,\frac{\partial}{\partial x}\begin{pmatrix}u_1\\u_2\\\vdots\\u_n\end{pmatrix}
\]
where $F^k_j=p_{k-j}({\bf u})$.  Note that $F^k_{j}=0$ if $j>k$.
\end{proposition}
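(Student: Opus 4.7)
The plan is to recognize that both the coefficient matrix in Lemma \ref{triangular} and the claimed matrix ${\sf A}_k$ are upper triangular Toeplitz, hence they commute, and then to match the $k$-th equation against the $k=0$ equation via this commutativity. Concretely, let ${\sf M}$ denote the coefficient matrix on the left-hand side of Lemma \ref{triangular}, and write $e_n:=(0,\ldots,0,1)^T$.

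First I would use Lemma \ref{diffusion} to observe that $\Phi_{i,j}$ depends only on the sum $i+j$. Since the $(l,l+k)$-entry of ${\sf M}$ is $\Phi_{n-l+1,\,l+k}$, whose index sum is $n+1+k$, it follows that ${\sf M}$ is an upper triangular Toeplitz matrix — constant along every super-diagonal. Its diagonal entry corresponds to $i+j=n+1>n$, so by the hypothesis $\Phi_{i,j}\ne 0$ for $i+j>n$, this entry is nonzero, which makes ${\sf M}$ invertible. By inspection the matrix ${\sf A}_k$ in the proposition is also upper triangular Toeplitz (its $(l,l+k)$-entry equals $F^k_k$ independent of $l$). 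Since upper triangular Toeplitz matrices form a commutative subalgebra of $n\times n$ matrices, ${\sf M}{\sf A}_k={\sf A}_k{\sf M}$.

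Next I would identify the right-hand side of Lemma \ref{triangular} as $-{\sf A}_k e_n$: reading the last column of ${\sf A}_k$ from top to bottom gives exactly $(F^k_{n-1},F^k_{n-2},\ldots,F^k_0)^T$. Specializing to $k=0$, since $F^0_j=p_{-j}({\bf u})=\delta_{j,0}$, the matrix ${\sf A}_0$ is the identity, and Lemma \ref{triangular} reduces to
\[
{\sf M}\,\frac{\partial {\bf u}}{\partial t_0}=-e_n.
\]

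Finally, for arbitrary $k$, Lemma \ref{triangular} together with the identification above gives
\[
{\sf M}\,\frac{\partial {\bf u}}{\partial t_k}=-{\sf A}_k e_n={\sf A}_k\!\left({\sf M}\,\frac{\partial {\bf u}}{\partial t_0}\right)={\sf M}\!\left({\sf A}_k\,\frac{\partial {\bf u}}{\partial t_0}\right),
\]
where the last equality uses the commutativity established above. Since ${\sf M}$ is invertible, I can cancel it from both sides and obtain the desired equation (with $\partial/\partial t_0$ in place of the $\partial/\partial x$ appearing in the statement, consistently with the notation used throughout Section \ref{sec:HD}). There is no real obstacle here: the non-obvious ingredient is the Toeplitz structure of ${\sf M}$, which is really a restatement of Lemma \ref{diffusion}; everything else is bookkeeping about upper triangular Toeplitz matrices and recognizing the right-hand side vector as a column of ${\sf A}_k$.
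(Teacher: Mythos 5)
Your proof is correct and follows essentially the same route as the paper's: write the right-hand side of Lemma \ref{triangular} as $-{\sf A}_k e_n$, compare with the $k=0$ equation, and use commutativity plus invertibility of the triangular coefficient matrix to cancel it. The only difference is that you explicitly justify the commutativity and invertibility via the upper triangular Toeplitz structure coming from Lemma \ref{diffusion} (which the paper merely asserts, having noted the constant super-diagonals just before the Lemma), so your write-up is a more complete version of the same argument.
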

\begin{proof}
First note that for $k=0$, we have  $F^0_j=0$ for $j=1,\ldots,n-1$ and $F^0_0=1$.
Then write
\[
\begin{pmatrix}
F^k_{n-1}\\ F^k_{n-2}\\ \vdots\\ F_0^k\end{pmatrix}=
\begin{pmatrix}
F^k_0&F^k_2&\cdots &F^k_{n-1}\\
0&F^k_0&\cdots &F^k_{n-2}\\
\vdots &\vdots & \ddots &\vdots\\
0&0&\cdots &F^k_0
\end{pmatrix}\,\begin{pmatrix}0\\0\\\vdots\\1\end{pmatrix}
\]
Then note that the coefficient matrices in Lemma \eqref{triangular} and those in Proposition \eqref{hydrodynamic} commute.  The assumption $\Phi_{n,1}\ne 0$ implies the invertibility of
the coefficient matrix $(\Phi_{i,j})$ in Lemma \eqref{triangular}.  This proves the Proposition.
\end{proof}

We write the hydrodynamic-type equation in the vector form,
\begin{equation}\label{hierarchy}
\frac{\partial {\bf u}}{\partial t_k}={\sf A}_k\,\frac{\partial {\bf u}}{\partial x},\qquad\text{for}\quad k=1,\ldots,m,
\end{equation}
where ${\sf A}_k$ represents the $n\times n$ coefficient matrix in the Proposition.
The matrix ${\sf A}_1$ is given in \eqref{N-diagonal}, and ${\sf A}_2$ is
\[
{\sf A}_2=\begin{pmatrix}
u_2+\frac{1}{2}u_1^2 & u_1 & 1  &\cdots &0\\
0 & u_2+\frac{1}{2}u_1^2 & u_1 & \cdots &0\\
\vdots &\ddots&\ddots &\ddots &\vdots\\
0 & \cdots & 0 & u_2+\frac{1}{2}u_1^2&u_1\\
0& \cdots &\cdots & 0 &  u_2+\frac{1}{2}u_1^2
\end{pmatrix}
\]

We can also confirm that those systems are compatible.
\begin{proposition}  The system of the equations \eqref{hierarchy} is compatible, i.e.
\[
\frac{\partial ^2{\bf u}}{\partial t_k\partial t_m}=\frac{\partial ^2{\bf u}}{\partial t_m\partial t_k}
\]
\end{proposition}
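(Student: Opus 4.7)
The plan is to combine an algebraic commutativity observation about the coefficient matrices with the implicit function theorem, reducing the compatibility statement to Clairaut's theorem on equality of mixed partials.

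First I would observe that the coefficient matrices pairwise commute. Explicitly, from Proposition \ref{hydrodynamic},
\[
{\sf A}_k = \sum_{j=0}^{n-1} p_{k-j}({\bf u})\,\Lambda^j
\]
where $\Lambda = \Lambda_n$ is the nilpotent upper-shift matrix in \eqref{A1}, so each ${\sf A}_k$ is a polynomial in the single matrix $\Lambda$ with scalar coefficients, and therefore ${\sf A}_k {\sf A}_l = {\sf A}_l {\sf A}_k$ for all $k, l$. Equivalently, the matrix generating function $A(z) := \sum_{k \geq 0} {\sf A}_k z^k = \eta_*({\bf u};z)(I - z\Lambda)^{-1}$ enjoys the manifest symmetry $A(z) A(w) = A(w) A(z)$.

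Next I would invoke the implicit function theorem. Under the standing hypothesis ``$\Phi_{i,j} \neq 0$ for $i+j > n$'' of Proposition \ref{hydrodynamic}, the upper-triangular matrix of Lemma \ref{triangular} has every diagonal entry of the form $\Phi_{i,j}$ with $i+j = n+1 > n$, hence nonzero, and is therefore invertible. The critical equations $\partial\Phi/\partial y_j|_{{\bf y}={\bf u}} = 0$ for $j = 1,\ldots,n$ thus implicitly determine ${\bf u}$ as a smooth function of $(t_0, t_1, \ldots, t_m)$. Since the hierarchy \eqref{hierarchy} was obtained by differentiating these very critical equations with respect to each $t_k$, the single smooth function ${\bf u}({\bf t})$ solves every flow simultaneously, and Clairaut's theorem immediately yields $\partial^2{\bf u}/\partial t_k \partial t_m = \partial^2{\bf u}/\partial t_m \partial t_k$.

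For completeness, one can also verify the compatibility purely algebraically. Expanding $\partial_m({\sf A}_k \partial_0 {\bf u}) = \partial_k({\sf A}_m \partial_0 {\bf u})$ and using the commutativity of ${\sf A}_k$ and ${\sf A}_m$ to cancel the $\partial_0^2{\bf u}$ contributions reduces the problem to the identity
\[
\bigl(\partial_m {\sf A}_k - \partial_k {\sf A}_m + {\sf A}_k\, \partial_0 {\sf A}_m - {\sf A}_m\, \partial_0 {\sf A}_k\bigr)\partial_0 {\bf u} = 0.
\]
The simplifying observation $\partial {\sf A}_k/\partial u_\alpha = {\sf A}_{k-\alpha}$, which follows from $\partial A(z)/\partial u_\alpha = z^\alpha A(z)$ (itself a consequence of $\partial \eta_*/\partial u_\alpha = z^\alpha \eta_*$), together with the flow relations $\partial_l u_\alpha = ({\sf A}_l \partial_0 {\bf u})_\alpha$, reduces what remains, after packaging via generating functions in two auxiliary variables $z, w$, back to the symmetry $A(z) A(w) = A(w) A(z)$ of the first step. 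The main obstacle in this direct route is managing the bilinear bookkeeping in the cross-derivatives; the implicit function theorem argument circumvents it entirely, and that is the version I would present.
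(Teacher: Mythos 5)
Your two algebraic ingredients are correct: the matrices ${\sf A}_k=\sum_{j\ge 0}p_{k-j}({\bf u})\Lambda^j$ are polynomials in the single nilpotent matrix $\Lambda$ and hence commute, and the derivative identity $\partial{\sf A}_k/\partial u_\alpha={\sf A}_{k-\alpha}$ holds. The problem is the argument you elect to present. The content of the proposition --- the content the paper's own proof targets, and the one that backs the introduction's claim that the hierarchy ``has an infinite set of commuting flows'' --- is that the overdetermined system \eqref{hierarchy} is consistent \emph{as a system of PDEs}: when one cross-differentiates two flows and eliminates every $t$-derivative by means of the equations themselves (i.e.\ replaces $\partial u_\alpha/\partial t_m$ by $({\sf A}_m\,\partial_{t_0}{\bf u})_\alpha$ inside the chain rule), the resulting conditions
\[
\frac{\partial {\sf A}_k}{\partial t_m}-\frac{\partial {\sf A}_m}{\partial t_k}+{\sf A}_k\frac{\partial {\sf A}_m}{\partial t_0}-{\sf A}_m\frac{\partial {\sf A}_k}{\partial t_0}=0,
\qquad
{\sf A}_k{\sf A}_m={\sf A}_m{\sf A}_k,
\]
hold \emph{identically} in ${\bf u}$ and $\partial_{t_0}{\bf u}$, so that no constraints beyond the equations arise and the flows commute on arbitrary common solutions. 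Your implicit-function-theorem/Clairaut argument establishes only that the particular critical-point family ${\bf u}({\bf t})$ has symmetric mixed partials, which is automatic for \emph{any} smooth function and is therefore empty as a statement about the system. Nor does the existence of that family imply the identity above: for fixed $m$ the hodograph solutions form a finite-parameter family, while the general solution of even a single flow is parametrized by $n$ functions of one variable, so vanishing of the compatibility expression on this thin family does not force it to vanish identically. To rescue your route you would have to show that the jets $({\bf u},\partial_{t_0}{\bf u})$ realized by hodograph solutions are Zariski dense in the relevant jet space --- something you neither state nor prove.

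The fix is to promote your ``for completeness'' paragraph to the actual proof; that is essentially what the paper does. It writes $({\sf A}_k)_{i,j}=p_{k-j+i}({\bf u})$, substitutes the flows into the cross-derivatives, and verifies that the coefficient of each $\partial u_j/\partial t_0$ is a vanishing quadratic identity among elementary Schur polynomials. Your generating-function packaging $A(z)=\eta_*({\bf u};z)(I-z\Lambda)^{-1}$ with $\partial A(z)/\partial u_\alpha=z^\alpha A(z)$ would genuinely streamline that computation, but note that your punchline is overstated: the cancellation does not reduce literally to $A(z)A(w)=A(w)A(z)$. After the $\partial_{t_0}^2{\bf u}$ terms drop out by commutativity, the coefficient of $\partial_{t_0}u_\beta$ reduces, upon clearing the commuting resolvents $(I-z\Lambda)^{-1}$ and $(I-w\Lambda)^{-1}$, to scalar symmetric-function cancellations in $z,w$ (for instance $\sum_{\gamma=0}^{\beta-1}\bigl(z^{\beta-\gamma}w^{\gamma}-w^{\beta-\gamma}z^{\gamma}\bigr)=z^\beta-w^\beta$ against the term $w^\beta-z^\beta$, and a separate cancellation at order $\Lambda$). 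These identities do hold, but they are the actual content of the proof and must be carried out rather than asserted.
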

\begin{proof}
The compatibility conditions give
\begin{align*}
&\frac{\partial {\sf A}_k}{\partial t_m}-\frac{\partial {\sf A}_m}{\partial t_k}+{\sf A}_k
\frac{\partial {\sf A}_m}{\partial x}-{\sf A}_m\frac{\partial {\sf A}_k}{\partial x}=0\\[0.5ex]
&{\sf A}_k{\sf A}_m={\sf A}_m{\sf A}_k
\end{align*}
The commutativity of the second equation is obvious.
Noting that the entry $({\sf A}_k)_{i,j}$ is given by $p_{m-j+i}$, we can write the derivative in the vector field form,
\[
\frac{\partial {\sf A}_k}{\partial t_m}=\left(\sum_{i\le j}F^m_{i,j}\frac{\partial u_j}{\partial x}\frac{\partial}{\partial u_i}\right)\,{\sf A}_k\qquad\text{with}\qquad F^m_{i,j}=p_{m-j+i}.
\]
The $(r,s)$-entry of the first equation of the compatibility gives 
\[
\sum_{i\le j}\left(F^m_{i,j}\frac{\partial u_j}{\partial x}\frac{\partial}{\partial u_i}F^k_{r,s}-
F^k_{i,j}\frac{\partial u_j}{\partial x}\frac{\partial}{\partial u_i}F^m_{r,s}+
F^k_{r,i}\frac{\partial u_j}{\partial x}\frac{\partial}{\partial u_j}F^k_{i,s}-
F^m_{r,i}\frac{\partial u_j}{\partial x}\frac{\partial}{\partial u_j}F^k_{i,s}\right)
\]
Using $F^m_{i,j}=p_{m-j+i}$ etc, the coefficient of the derivative $\frac{\partial u_j}{\partial x}$ then gives the following relation among
$p_j({\bf u})$,
\[
\sum_{i=1}^n\left(p_{m-j+i}p_{k-s+r-i}-p_{k-j+i}p_{m-s+r-i}+p_{k-i+r}p_{m-s+i-j}-p_{m-i+r}p_{k-s+i-j}\right)
\]
where we have $r\le i\le s\le j$.  
A direct computation shows that this equation is indeed zero.  Note for example, the terms with $i=r$ 
and $i=j$ cancel each other out.
\end{proof}

%%%%%%%%%%%%%%%%%%%%%%%%%%%%%%%%%%%%%%% 

\subsection{Hodograph solution of the system \eqref{hierarchy}}
One can write the solution of the system \eqref{hierarchy} is terms of the Schur polynomials.
Note that the critical point equation \eqref{critical1} can be written in the matrix form
(a matrix generalization of the hodograph solution),
\begin{equation}\label{matrixhodograph}
{\sf H}({\bf t};{\bf u}):=\sum_{k=0}^m t_k{\sf A}_k=0.
\end{equation}
Using this, one can get exact solutions for ${\bf u}=(u_1,\ldots, u_n)$  by fixing $t_i$ for $i>n$, i.e.
\[
\sum_{k=0}^nt_k{\sf A}_k({\bf u})+{\sf B}({\bf u})=0\qquad\text{where}\qquad 
{\sf B}({\bf u})=\sum_{k=0}^\infty c_k{\sf A}_k({\bf u}),
\]
with arbitrary constants $\{a_k:k=0,1,\ldots\}$. Notice that the $(1,1)$-entry of \eqref{matrixhodograph}, say $h({\bf u},t)$, of the this matrix equation produces all other entries, that is, the $(i,j)$-entry with $i\le j$ is given by
\[
\left({\sf H}({\bf t};{\bf u})\right)_{i,j}=\frac{\partial^{j-i} h}{\partial u_{1}^{j-i}}({\bf u},t)
\qquad\text{with}\qquad h({\bf t};{\bf u})=\sum_{k=0}^\infty t_kp_k({\bf u}).
\]
For example, consider the hodograph solution for $n=4$, we have
\begin{align*}
h({\bf t};{\bf u})=t_0+p_1t_1+p_2t_2+p_3t_3+p_4t_4&=0
\end{align*}
Note the identity among the symmetric polynomials $\{h_j=p_j, e_j\}$, 
\[
\sum_{i=0}^n(-1)^ie_{n-i}h_{i}=0.
\]
The elementary symmetric functions are defined by $e_j({\bf u})=(-1)^{j}p_j(-{\bf u})$, e.g.
\[
e_1({\bf u})=u_1,\qquad e_2({\bf u})=-u_2+\frac{1}{2}u_1^2,\qquad e_3({\bf u})=u_3-u_1u_2+\frac{1}{3}u_1^3.
\]
Then we have
\[
t_i=(-1)^ke_{n-i}({\bf u})\qquad\text{for}\qquad i=0,1,\ldots,n.
\]
More general case, one can consder
\begin{align*}
h({\bf u},t)=t_0+p_1t_1+p_2t_2+p_3t_3+p_m&=0
\end{align*}
Then the solution can be represent the Schur polynomials,
\[
t_3=-S_{(m)}\qquad t_2=S_{(m,1)},\qquad t_1=-S_{(m,1,1)},\qquad t_0=S_{(m,1,1,1)},
\]
where $(i_1,i_2,...,i_m)$ represents the partition of the number $n=i_1+\cdots+i_m$
with $i_1\ge i_2\ge \cdots\ge i_m$ (i.e. the Young diagram).
The function $S_{(i_1,\ldots,i_m)}$ is the Schur polynomial associated to the Young diagram $(i_1,\ldots,i_m)$.

\begin{remark}
In a formal limit $n\to \infty$, the function $\eta_*({\bf y};z)$ in \eqref{potential} has a form,
\[
\eta_*({\bf y};z)=\exp\left(\sum_{k=1}^\infty y_kz^k\right).
\]
Introduce a finite set of Miwa variables $\{w_1,w_2,\ldots,w_n\}$
defined by
\[
y_k:=\frac{1}{k}\sum_{i=1}^n \nu_i w_i^k\qquad\text{for some}\quad \nu_i\in\mathbb{C}\quad\text{and}\quad k=1,2\ldots.
\]
which may be considered as a finite reduction of the infinite system of the ${\bf y}$-system.
Then the generalized Lauricella function becomes
\[
\eta_*({\bf y}({\bf w});z)=\prod_{i=1}^n (1-w_i z)^{-\nu_i}.
\]
That is, the most degenerate case of $\text{Gr}(2,\infty)$ may be considered as 
the generic one in terms of the Miwa variables.
\end{remark}

\begin{remark}
Infinite-component hydrodynamic type systems associated with the function $\eta_*({\bf y};z)$ represent themselves a class of hydrodynamic chains. Indeed, the system \eqref{N-diagonal} for k=1 rewritten as 
\begin{equation}\label{chain}
\frac{\partial u_i}{\partial t_1}=u_1\frac{\partial u_i}{\partial t_0}+\frac{\partial u_{i+1}}{\partial t_0},\qquad \text{for}
\qquad i=1,2,\ldots,
\end{equation}
coincides with the strictly positive part of Pavlov's chain given in \cite{P:03} (i.e. (37) for $k=1,2,\ldots$ and with $c_k=-u_k$). However, in contrast to the chain (37) in \cite{P:03},  all variables in the chain \eqref{chain} are functionally independent. In addition, the finite truncation of the Pavlov's chain (e.g.  $c_{N+k}=0$ for $k=1,2,...$) can be achieved exactly by  the limit described at the beginning of this section.
\end{remark}
%%%%%%%%%%%%%%%%%%%%%%%%%%%%%%%%

%%%%%%%%%%%%%%%%%%%%%%%%%%%%%%%

\bigskip
\noindent
\textbf{Acknoledgements}: The first author (YK) was partially
supported by NSF grant DMS-1410267.  YK would like to thank the organizers of the conference
PMNP 2015 for the invitation.  The second author (BK) acknowledges support by the PRIN 2010/2011
grant 2010JJ4KBA\_003.

%%%%%%%%%%%%%%%%%%

\bibliographystyle{amsalpha}

\begin{thebibliography}{A}



\bibitem{A:882} P. Appell, Sur les fonctions hypergeometriques de deux variables, J.de
Math., 3 ser. VIII (1882), 173-216.

\bibitem{AK:11}  K. Aomoto and M. Kita, {\it Theory of hypergeometric functions}, Springer,
Tokyo, 2011.

%\bibitem{Darboux} G. Darboux, Lecons sur la Theorie generale des surfaces,
%vol.II, Chelsea Publ. Comp., New York, 19 \ .

%\bibitem{D:10} G. Darboux, Lecons sur les Systemes Orthogonaux et les Coordonnes
%Curvilignes, Gauthier-Villars, Paris (1910).

\bibitem{DM:86}  P. Deligne and G. D. Mostow, Monodromy of hypergeometric functions and
nonlattice integral of monodromy, Publ. Math. IHES, 63 (1986), 1-89.


%\bibitem{Du:97} B. Dubrovin, Functionals of Peierls-Frolich type and
%variational principle for Whitham equations, Amer. Math.Soc. Transl., (2)
%vol. 179 (1997), 35-44.



\bibitem{DN:89}  B. A. Dubrovin and S. P. Novikov, Hydrodynamics of weakly deformed soliton
lattices. Differential geometry and Hamiltonian theory, Russ. Math. Surveys,
44 (1989) 35-144.

%\bibitem{Ei:62} L.P. Eisenhart, Transformations of surfaces, Chelsea
%Publ.Com., New York, 1962.

%\bibitem{E:96} G. A. El, Generating function of the Whitham-KdV hierarchy and
%effective solution of the Cauchy problem, Phys. Lett. A, 222 (1996), 393-399.

\bibitem{F:93} E. V. Ferapontov, On integrability of $3\times 3$ semi-Hamiltonian hydrodynamic type systems which do not possess Riemann invariants, Phys. D, 63 (1993), 50-70.

\bibitem{F:94} E. V. Ferapontov, Several conjectures and results in the theory of integrable Hamiltonian systems of hydrodynamic type, which do not possess Riemann invariants, Theor. Math. Phys., 99 (1994), 567-570.

\bibitem{FFM:80} H. Flashka, M. G. Forest and D. W. McLaughlin, Multiphase
averaging and the inverse spectral soluions of the Korteweg-de Vries
equation, Commun. Pure Appl. Math., 33 (1980), 739-784.

\bibitem{G:86} I. M. Gelfand, General theory of hypergeometric functions, Soviet
Math. Dokl., 33 (1986), 9-13.

\bibitem{GGR:92}  I. M. Gelfand, M. I. Graev and V. S. Retakh, General hypergeometric systems \
of equations and series of hypergeometric type, Russ. Math. Surveys, \ \ 47\
(1992), 1-88.

\bibitem{GRS:88} I. M. Gelfand, V. S. Retakh and V. V. Serganova, Generalized Airy functions,
Schubert cells and Jordan groups, Soviet Math. Dokl., 37 (1988), 8-12.


\bibitem{GZ:86} I. M. Gelfand and A. V. Zelevinskii, Algebraic and combinatorial aspects
of general theory of hypergeometric functions, Func.Anal.Appl., \ 20\ (1986) 183-197.

%\bibitem{GKE:91} A.V. Gurevich, A.L. Krylov and G.El, Riemann wave breaking in
%dispersive hydrodynamic,\ JETP Lett., \ 54 (1991), 102-107.


\bibitem{KT:06}  H. Kimura and K. Takano, On confluence of general hypergeometric systems,
Tohoku Math.J., 58 (2006), 1-31.


\bibitem{K:89} Y. Kodama, Integrability of hydrodynamic type equations,
{\it Nonlinear World}, Proceedings of the IV international workshop on nonlinear and turbulent processes in physics, vol.2, Kiev, USSR, October 9-22, 1989, (Kiev, Naukova Dumka) 115-118.

\bibitem{KK:02} Y. Kodama and B. Konopelchenko,
Singular sector of the Burgers-Hoph hierarchy and deformation of hyper elliptic curves,
J. Phys. A: Math. Gen., 35 (2002) L489-L500.

\bibitem{KKS:15}  Y. Kodama, B. Konopelchenko and W. K. Schief, Critical points, Lauricella
functions and Whitham-type equations, J. Phys. A: Math.Theor., 48 (2015),
225202.



\bibitem{KMM:10}  B. Konopelchenko, L. Martinez Alonso and E. Medina, Hodograph solutions of
the dispersionless coupled KdV hierarchies, critical points and the
Euler-Poisson-Darboux equation, J.Phys. A: Math. Theor., 43 (2010), 434020.

%\bibitem{KS:93} B. G. Konopelchenko and W. K. Schief, Lame and
%Zakharov-Manakov systems: Combescure, Darboux and Backlund transformations,
%preprint AM93/9, UNSW, 1993.


%\bibitem{KS1:91} V. R. Kudashev and S. E. Sharapov, Inheritance of KdV
%symmetries under Whitham averaging and hydrodynamic symmetries of Whitham
%equations,Theor. Math. Phys., \ 87 (1891), 358-363.

%\bibitem{KS2:91}V. R. Kudashev and S. E. Sharapov, Hydrodynamic symmetries for
%the Whitham equations for the nonlinear Schrodinger equation (NSE), Phys.
%Lett. A, 154 (1991), 445-448.

%\bibitem{Ku:91} V.R. Kudashev, Wave-number conservation and succession of symmetries
%during of Whitham averaging, JETP Lett., 54 (1991) 175-179.

\bibitem{L:893} G. Lauricella, Sulle funzioni ipergeometriche a piu
variabili, Rendiconti del Circolo Mat. Palermo, 7, (1893), 111-158.




%\bibitem{Lo:06}  P. Lorenzoni, Flat bidifferential ideals and semihamiltonian PDEs, J.
%Phys. A: Math. Gen., 39 (2006) 13701-13715.


\bibitem{Loo:07} E. Looijenga, Uniformization by Lauricella functions- an
overview of the theory of Deligne- Mostow, in  {\it Arithmetic and geometry
around hypergeometric functions}, Prog.Math., 260 Birkhauser, Basel, 2007,
pp.207-244; (arXiv:math/0507534).

\bibitem{M:95} O. I. Mokhov, Symplectic and Poisson geometry on loop spaces of manifolds and nonlinear equations, In: Topics in topology and mathematical physics (S.P. Novikov ed.), Transl. Am. Math. Soc. Ser. 2, vol. 170, Am. Math. Soc., Providence, 121-151, (1995).

\bibitem{MF:96} O. I. Mokhov and E. V. Ferapontov, The associativity equations in the two-dimensional topological field theory as integrable Hamiltonian nondiagonalizable systems of hydrodynamic type, Func. Anal. Appl., 30 (1996), 195-203.



\bibitem{OS:10} A. V. Odesskii and V. V. Sokolov, Integrable pseudopotentials related to
generalized hypergeometric functions, Selecta Math., (N.S.), 16 (2010) 145-172.

%\bibitem{P:92} M. V.Pavlov, Integrable hydrodynamic type systems, PhD
%thesis, 1992.

%\bibitem{P:87} M. V. Pavlov, Nonlinear Schrodinger equation and the
%Bogolubov-Whitham method of averaging, \ Theor. Math.Phys., \ 71\ (1987), 584-588.

\bibitem{P:03}  M. V. Pavlov, Integrable hydrodynamic chains, J.Math.Phys., 44 (2003),
4134-4156.

\bibitem{P:04} M. V. Pavlov, Classification of the Egorov hydrodynamic chains,
Theor. Math. Phys. 138 (2004) 55-71.

\bibitem{RY:80} B. L. Rozhdestvenskii and N.N. Yanenko, Systems of Quasilinear Equations
and Their Applications in Gas Dynamics, Transl. Math. Monographs, vol. 55,
AMS, Providence, RI (1980).


\bibitem{St:07} J. Stienstra, GKZ hypergeometric structures, in : Arithmetic
and geometry around hypergeometric functions, Prog.Math., 260 Birkhauser,
Basel, 2007, pp.313-371; (arXiv:math/0511351).

%\bibitem{T:94}F.R. Tian, The Whitham-type equations and linear
%overdetermined systems of Euler-Poisson-Darboux type, Duke Math. J., 74
%(1994), 203- 221.

\bibitem{T1:91} S. P. Tsarev, The geometry of Hamiltonian systems of hydrodynamic type. The
generalized hodograph method, Math. USSR Izvestiya, 37 (1991), 397--419.



\bibitem{W:74} G. B. Whitham, {\it Linear and nonlinear waves}, John Wiley \& Sons,
New York, 1974.


\bibitem{Z:80}  V. E. Zakharov, Benney equations and quasiclassical approximation in the
inverse problem method, Func. Anal. Appl., 14 ( 1980), 89-98.



\end{thebibliography}

\end{document}